\definecolor{greenlink}{rgb}{0.3, 0.6, 0.3}
\newtheorem{thm}{Theorem}
\address{%
$^{1}$ \quad Dipartimento di Fisica and MECENAS, Universit\`a di Bari, I-70126 Bari, Italy\\
$^{2}$ \quad INFN, Sezione di Bari, I-70126 Bari, Italy\\
$^{3}$ \quad Dipartimento di Fisica, Universit\`a di Trieste, I-34151 Trieste, Italy
}
\abstract{Given the algebra of observables of a quantum system subject to selection rules, a state can be represented by different density matrices. As a result, different von Neumann entropies can be associated with the same state.
Motivated by a minimality property of the von Neumann entropy of a density matrix with respect to its possible decompositions into pure states, we give a purely algebraic definition of entropy for states of an algebra of observables, thus solving the above ambiguity.
 The entropy so defined satisfies all the desirable thermodynamic properties, and reduces to the von Neumann entropy in the quantum mechanical case. Moreover, it can be shown to be equal to the von Neumann entropy of the unique representative density matrix belonging to the operator algebra of a multiplicity-free Hilbert-space representation.}
\begin{document}
%%%%%%%%%%%%%%%%%%%%%%%%%%%%%%%%%%%%%%%%%%

\section{Introduction}

In 1931, von Neumann~\cite{von1996mathematical} found a connection between two branches of physics: quantum mechanics and thermodynamics. If a system satisfies the laws of thermodynamics, its entropy is well defined. With this in mind, von Neumann obtained that a quantum system, described by a density matrix $\rho$, has entropy
\begin{equation}\label{vnentropy}
\mathcal{S}_{\mathrm{VN}}(\rho)=-k_{\mathrm B}\operatorname{Tr} (\rho \log \rho),
\end{equation}
where $k_{\mathrm B}$ is the Boltzmann constant.

Besides its importance from a fundamental point of view, von Neumann entropy is useful also to answer practical questions in quantum information theory, for example when dealing with multipartite systems and one wants to characterize the entanglement between them: in this context, it has been shown~\cite{popescu97,donald2002uniqueness} that a particularly meaningful measure of the entanglement contained in a pure state shared by two parties is the von Neumann entropy of the reduced state of one party, since it allows to characterize the usefulness of such entanglement in the thermodynamic limit when multiple copies of the state are available. This measure is then extended to mixed states by exploiting the convex structure of the set of quantum states, and taking the infimum over all the possible decompositions into pure states~\cite{horodecki2001entanglement}.

As quantum field theory developed, attempts to extend equation~(\ref{vnentropy}) to a broader scheme have been made. Von Neumann entropy is used to evaluate the entropy of a black hole~\cite{bombelli1986quantum,holzhey1994geometric,srednicki1993entropy}, which originates from the lack of knowledge of the system inside of it. However, ``it is never hard to find trouble in field theory''~\cite[p.~74]{bjorken1966relativistic}, as ambiguities in this definition arise from the dependence on the cutoffs introduced to regularize the theory. In~\cite{balachandran2013quantum,sorkin2014expressing} this ambiguity was traced back to the ambiguity in the definition of a density matrix associated with a state in an algebraic theory. In fact, the proper mathematical formalization of a quantum field theory requires the introduction of $C^*$-algebras~\cite{haag1964algebraic}. In this context, in general the set of observables is not the full operator algebra, but a subalgebra~\cite{bratteli2012operator,davidson1996c}. 

We can get a glance of the ambiguity by the following example.
Consider the algebra of diagonal $n\times n$ matrices
\begin{equation}
\mathfrak{A}=\{A\in M_n: A_{ij}=0 \textnormal{ for } i\neq j \}.
\end{equation}
For any density matrix $\rho$ the result of a measurement is
\begin{equation}
\operatorname{Tr}(\rho A)=\sum_{i}\rho_{ii} A_{ii},
\end{equation}
and depends only on the diagonal elements of $\rho$. Thus, density matrices define the same state as long as they have the same diagonal elements. However, their von Neumann entropy, as defined in equation~(\ref{vnentropy}), can be different.
Which density matrix is associated with the correct physical entropy?

In order to attain an unambiguous definition of entropy, it is necessary to study states as abstract entities rather than density matrices.
In this abstraction, the only relevant feature of the set of quantum states is its convex structure, and the problem translates into the more general question of giving a sensible definition of entropy for points in a convex set. This problem has been studied in~\cite{uhlmann1970shannon,balachandran2013entropy}.

The purpose of this Article is to give an unambiguous definition of entropy for a state over an algebra of observables, connecting this problem to the definition of entropy for points on a convex set. Also, the physical implications of this mathematical definition will be investigated, together with its thermodynamic interpretation and its connection with von Neumann entropy. The study will be carried out for a finite dimensional algebra.

The Article is organized as follows. In section~\ref{sec:ObservablesStates} we introduce the essential notation and briefly recall the algebraic approach to quantum theory. In particular, we discuss a structure theorem for finite dimensional $C^*$-algebras, which plays a central role in the derivation of the results presented. Then, in section~\ref{sec:entropies} we briefly discuss the relation between von Neumann entropy of a density matrix in  quantum mechanics and the Shannon entropy of its possible  decompositions into pure states, which motivates  the  definition of the entropy for a state over a $C^*$-algebra as the infimum over its possible decompositions. In section~\ref{chapentrcstaralgeb} we explicitly  compute the quantum entropy of a state by using first a generic faithful representation, and then the GNS construction, and we show its connection with von Neumann entropy. We also discuss some physical implications by extending a thermodynamic argument due to von Neumann to the algebraic setting. Finally, in section~\ref{sec:conc} we conclude the paper with some remarks.

\section{Algebraic approach: observables and states}\label{sec:ObservablesStates}
The formal description of a quantum field theory is given in terms of algebras~\cite{haag1964algebraic,araki1999mathematical}. The main idea is to define observables for each region of space-time, such that observables associated with casually disjointed regions are compatible (or simultaneously measurable).

The set of observables $\mathfrak A$ is required to satisfy certain properties, that define the structure of a $C^*$-algebra. One considers the observables of a given experiment, and defines states as positive linear functionals giving the expectation values of the measurement outcomes. This is at variance with the standard quantum mechanics description on Hilbert spaces, where one starts by considering the set of vector states, and then defines the observables as operators on this set.

A $C^*$-algebra is a Banach space (i.e.\ a normed and complete vector space) $\mathfrak{A}$ with a product
\begin{equation}\label{product}
(A,B)\in \mathfrak{A}\times\mathfrak{A} \mapsto AB \in\mathfrak{A}
\end{equation}
and an involution
\begin{equation}\label{involution}
A\in \mathfrak{A} \mapsto A^* \in\mathfrak{A},
\end{equation}
satisfying $\left\lVert{A}\right\rVert^2=\left\lVert{A^*A}\right\rVert$.

An algebra can be \textit{represented} as an algebra of operators on a Hilbert space $\mathcal{H}$. More precisely, a representation of the theory is a pair $(\mathcal{H},\pi)$ where $\pi$ is a linear map from $\mathfrak A$ to $\mathcal{B}(\mathcal{H})$ preserving~(\ref{product}) and~(\ref{involution}) and $\mathcal{B}(\mathcal{H})$ is the algebra of bounded operators on $\mathcal{H}$. A representation is said to be faithful when $\pi(A)=0$ if and only if $A=0$.

Given an algebra of observables, a state is characterized by the measurement outcomes. States are defined as functionals
\begin{equation}
\omega:\mathfrak{A}\rightarrow \mathbb{C}
\end{equation}
satisfying
\begin{enumerate}
\item[(a)] $\omega(A^*A)\geqslant 0$,
\item[(b)] $\omega(\mathbb{I})=\left\lVert{\omega}\right\rVert=1$.
\end{enumerate}
Here $\mathbb{I}$ is the unit element of the algebra. The definition can be extended to non-unital algebras, see~\cite{bratteli2012operator}.
The convex combination of two states $\omega_1$ and $\omega_2$,
\begin{equation}
\label{eq:convcomb}
\omega = \lambda \omega_1 + (1-\lambda) \omega_2,
\end{equation}
with $\lambda\in (0,1)$, is still a state.
A state is called \emph{pure} or \emph{extremal} if it cannot be written as a convex combination of other states, that is if equation~\eqref{eq:convcomb} implies that $\omega_1=\omega_2=\omega$.
The states over an algebra $\mathfrak{A}$ with a unit element form a convex weakly-* compact set and coincide with the weak-* closure of the convex envelope of its pure states. In other words, we can always decompose a state into pure states.

In the standard quantum mechanical approach states are represented by density matrices $\rho$, and the expectation value of an observable $A\in \mathcal{B(H)}$ is given by
\begin{equation}
\omega_\rho (A) 
=\operatorname{Tr}(\rho A),
\end{equation}
which becomes $\langle{\psi}|{A\psi}\rangle$ for a vector state, that is a rank-1 projection $\rho=\ket{\psi}\!\bra{\psi}$, with $\|\psi\|=1$. It is immediate to verify that this is a functional  satisfying both properties  (a) and (b), and thus is a state over the full  operator  algebra $\mathcal{B(H)}$.

In fact, one can prove that in the algebraic description a state can be always realized in this way, using the GNS construction~\cite{gelfand1994imbedding,segal1947irreducible}. Given  a $C^*$-algebra $\mathfrak{A}$ and a state $\omega$, there exists (up to a unitary transformation) a unique representation $(\mathcal{H}_\omega,\pi_\omega)$ and a unique unit vector $\Omega_\omega\in\mathcal{H}_\omega$   such that
\begin{equation}
\omega(A)=\langle{\Omega_\omega}|{\pi_\omega(A)\Omega_\omega}\rangle.
\end{equation}
Notice, however, that, at variance with quantum mechanics, in general the algebra $\pi_\omega(\mathfrak{A})$ is \emph{smaller} than the full operator algebra $\mathcal{B}(\mathcal{H}_\omega)$, and a vector state (and in particular $\ket{\Omega_\omega}\!\bra{\Omega_\omega}$) does \emph{not} necessarily correspond to a pure state and vice versa. This is the case when the quantum system is subject to superselection rules, or is composed by more than one thermodynamic phase~\cite{araki1999mathematical,strocchi2008introduction}. 

In this Article we are going to deal only with finite dimensional $C^*$-algebras. In this case the algebra is isomorphic to the direct sum of full matrix algebras~\cite{davidson1996c}:
\begin{thm}[Structure theorem]\label{structuttheorzero}
	Every finite dimensional $C^*$-algebra can be  faithfully represented as the direct sum of full matrix algebras 
	\begin{equation}\label{structuttheoruno}
	\pi(\mathfrak{A})= M_{n_1}\oplus M_{n_2}\oplus \dots \oplus M_{n_k},
	\end{equation}
	and thus any finite dimensional $C^*$-algebra is unital. Moreover, any faithful non-degenerate finite dimensional  representation has the form
	\begin{equation}\label{structtheordue}
	\pi(\mathfrak{A})=M_{n_1}^{(m_1)}\oplus M_{n_2}^{(m_2)}\oplus \dots \oplus M^{(m_k)}_{n_k},
	\end{equation}
	up to a unitary transformation, with
	\begin{equation}\label{structtheortre}
	M_{n}^{(m)}=\big\{ \underbrace{X\oplus X \oplus \dots\oplus X}_\text{m}:X\in M_n\big\}.
	\end{equation}
	the algebra obtained by repeating $m$ times the same element of $M_n$.
\end{thm}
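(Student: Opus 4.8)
\emph{Proof plan.} The plan is to run the classical Artin--Wedderburn structure theory in its $C^*$-algebraic guise, where the division rings that a priori occur all collapse to $\mathbb{C}$ and the double commutant theorem is elementary. First I would exhibit a faithful representation on a finite-dimensional Hilbert space. States separate the points of any $C^*$-algebra, so for every nonzero $A\in\mathfrak{A}$ there is a state $\omega$ with $\omega(A^*A)>0$ (use the $C^*$-identity $\lVert A^*A\rVert=\lVert A\rVert^2$ and Hahn--Banach); the associated GNS representation $\pi_\omega$ does not annihilate $A$ and acts on a space of dimension at most $\dim\mathfrak{A}$, since $\mathcal{H}_\omega$ is a quotient of $\mathfrak{A}$. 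Taking finitely many states $\omega_1,\dots,\omega_N$ that jointly separate a basis of $\mathfrak{A}$ and forming $\pi=\bigoplus_i\pi_{\omega_i}$ gives a faithful finite-dimensional representation; restricting to the essential subspace $\overline{\pi(\mathfrak{A})\mathcal{H}}$, on whose orthogonal complement $\pi$ vanishes, makes it non-degenerate. We may thus identify $\mathfrak{A}$ with a non-degenerate $*$-closed subalgebra of some $M_d=\mathcal{B}(\mathbb{C}^d)$.

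Next, being finite-dimensional, $\mathfrak{A}$ is closed in $M_d$, so by von Neumann's double commutant theorem $\mathfrak{A}=\mathfrak{A}''$; in particular $\mathbb{I}_{\mathbb{C}^d}\in\mathfrak{A}$ (so $\mathfrak{A}$ is unital) and $\mathfrak{A}$ contains the spectral projections of its self-adjoint elements. The center $Z=\mathfrak{A}\cap\mathfrak{A}'$ is a commutative finite-dimensional $C^*$-algebra, hence isomorphic to $\mathbb{C}^k$; let $p_1,\dots,p_k$ be its minimal projections, mutually orthogonal central projections with $\sum_j p_j=\mathbb{I}$. Then $\mathfrak{A}=\bigoplus_j \mathfrak{A}p_j$ and each summand $\mathfrak{A}_j:=\mathfrak{A}p_j$ is a $C^*$-algebra with center $\mathbb{C}p_j$, i.e.\ a \emph{factor}. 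I would then show a finite-dimensional factor is a full matrix algebra: pick in $\mathfrak{A}_j$ a nonzero projection $e$ minimizing $\dim(e\mathfrak{A}_je)$; if this dimension exceeded $1$, the $C^*$-algebra $e\mathfrak{A}_je$ would contain a self-adjoint element with more than one spectral value, hence a proper nonzero subprojection of $e$, a contradiction --- so $e\mathfrak{A}_je=\mathbb{C}e$, i.e.\ $e$ is minimal. Using the comparison theory of projections in a factor, choose a maximal family of mutually orthogonal projections $e=e_{11},e_{22},\dots,e_{n_jn_j}$ each Murray--von Neumann equivalent to $e$, say $v_i^*v_i=e$ and $v_iv_i^*=e_{ii}$ with $v_i\in\mathfrak{A}_j$; maximality together with triviality of the center forces $\sum_i e_{ii}=p_j$. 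Setting $e_{i\ell}:=v_iv_\ell^*$ yields a full system of matrix units ($e_{i\ell}^*=e_{\ell i}$, $e_{i\ell}e_{\ell'm}=\delta_{\ell\ell'}e_{im}$), and the relation $a=\sum_{i,\ell}e_{ii}\,a\,e_{\ell\ell}$ with $e_{ii}\,a\,e_{\ell\ell}\in\mathbb{C}e_{i\ell}$ (because $e\mathfrak{A}_je=\mathbb{C}e$) shows $\{e_{i\ell}\}$ spans $\mathfrak{A}_j$; hence $e_{i\ell}\mapsto$ matrix unit is a $*$-isomorphism $\mathfrak{A}_j\cong M_{n_j}$. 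Combined with the central decomposition this is exactly~\eqref{structuttheoruno}, and unitality is reconfirmed since each $M_{n_j}$ is unital.

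Finally, for the general statement I would use the identification $\mathfrak{A}\cong M_{n_1}\oplus\cdots\oplus M_{n_k}$ just obtained, with $p_j$ the units of the blocks. Given any faithful non-degenerate finite-dimensional representation $(\mathcal{K},\sigma)$, the $\sigma(p_j)$ are mutually orthogonal projections summing to $\mathbb{I}_{\mathcal{K}}$, so $\mathcal{K}=\bigoplus_j\mathcal{K}_j$ with $\mathcal{K}_j=\sigma(p_j)\mathcal{K}$, and $\sigma$ restricts on $\mathcal{K}_j$ to a representation $\sigma_j$ of the block $M_{n_j}$. Faithfulness gives $\sigma(p_j)\neq0$, hence $\sigma_j\neq0$, and $M_{n_j}$ being simple, $\sigma_j$ is faithful. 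Since $*$-representations are completely reducible (the orthogonal complement of an invariant subspace is again invariant) and $M_n$ has, up to unitary equivalence, a single irreducible representation --- its defining action on $\mathbb{C}^n$ --- one obtains that $\sigma_j$ is unitarily equivalent to $X\mapsto X\oplus\cdots\oplus X$ ($m_j$ copies, $m_j=\dim\mathcal{K}_j/n_j\geqslant1$). Reassembling the block unitaries into a single unitary on $\mathcal{K}$ gives the form~\eqref{structtheordue}--\eqref{structtheortre}.

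The technical heart is the middle step, and within it the identity $\sum_i e_{ii}=p_j$: this is exactly where factoriality of each block is genuinely used, through the comparison of projections and hence through $\mathfrak{A}_j=\mathfrak{A}_j''$. The secondary delicate point is the appeal in the last step to complete reducibility of $*$-representations and to the uniqueness of the irreducible representation of $M_n$; both are standard, but worth isolating as lemmas.
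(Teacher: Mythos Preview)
The paper does not prove this theorem at all: it is quoted as a known structural result with a reference to Davidson's textbook, and then used as a black box throughout. So there is no ``paper's own proof'' to compare against.

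Your proposal is a correct and essentially standard proof --- this is exactly the Artin--Wedderburn argument in its $C^*$-algebraic form, and is close in spirit to what one finds in Davidson. A couple of small points worth tightening. First, the phrase ``finitely many states that jointly separate a basis of $\mathfrak{A}$'' is not quite the right condition; what you need is finitely many states whose GNS kernels intersect trivially. This follows immediately from finite-dimensionality: the intersection $\bigcap_\omega \ker\pi_\omega$ over all states is $\{0\}$, and a descending chain of subspaces in finite dimension stabilizes after finitely many steps. Second, you invoke the double commutant theorem to get $\mathbb{I}\in\mathfrak{A}$ and spectral projections in $\mathfrak{A}$; in finite dimension the latter is automatic (spectral projections are polynomials in the operator by Lagrange interpolation), and for the former it is cleaner to argue directly from non-degeneracy via support projections, though your route is fine too. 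The argument that $\sum_i e_{ii}=p_j$ via factoriality is the genuine crux, as you note, and your sketch is right: if $q=p_j-\sum_i e_{ii}\neq0$, one shows $e\mathfrak{A}_jq=0$, whence $q$ is central, contradicting triviality of the center.
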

As a result of Theorem~\ref{structuttheorzero}, any finite dimensional algebra can be faithfully represented as a finite dimensional algebra of operators, as in equations~(\ref{structtheordue}) and~(\ref{structtheortre}). Note that standard quantum mechanics corresponds to the case $k=m_1=1$.

\section{Shannon entropy and von Neumann entropy}\label{sec:entropies}
Given a probability vector $\vec p=(p_1,p_2,\dots,p_n)$, with $p_i\geqslant 0$ and $\sum_i p_i =1$,
 its Shannon entropy is defined as
\begin{equation}
\mathrm{H}(\vec p)= -
\sum_{i=1}^n p_i\log p_i.
\label{eq:shannondef}
\end{equation}
As required for the entropy, $\mathrm{H}$ is a strictly concave function, that is
\begin{equation}
\mathrm{H}(\lambda \vec{p}+(1-\lambda)\vec{q})\geqslant \lambda \mathrm{H}( \vec{p})+(1-\lambda)\mathrm{H}(\vec{q}),
\end{equation}
for all $\lambda\in(0,1)$, with equality holding if and only if $\vec p=\vec q$.

There is a connection between Shannon entropy and von Neumann entropy in quantum mechanics. For a given density matrix $\rho$ with eigenvalues $\lambda_i$, its von Neumann entropy is (by setting the Boltzmann constant $k_{\mathrm B}=1$)
\begin{equation}\label{vnentropyformula}
\mathcal{S}_{\mathrm{VN}}(\rho)\equiv- 
\operatorname{Tr}{\rho \log \rho}=- 
\sum_{i=1}^N \lambda_i\log \lambda_i=\mathrm{H}(\vec\lambda).
\end{equation}
that is the Shannon entropy of its eigenvalues. It can be seen that von Neumann entropy is also strictly concave as a function of $\rho$.

There is a deeper connection between the two entropies. Given a state in quantum mechanics, described by a density matrix $\rho$, it can always be seen as a probabilistic mixture of vector states
\begin{equation}\label{decomp}
\rho = \sum_{i=1}^n p_i |{\phi_i}\rangle\langle{\phi_i}|,\quad p_i\geqslant 0,\quad \sum_{i=1}^n p_i=1.
\end{equation}
However, this decomposition is not unique and the same state can be prepared in different ways as a convex combination of vector states: using the language of convex geometry, the set of quantum states is not a simplex~\cite{Peres2002,bengtsson2017geometry}.

The ambiguity in the preparation of a state is one of the greatest difference between classical and quantum information theory~\cite{preskill2}. For any decomposition~(\ref{decomp}), it is possible to define a Shannon entropy $\mathrm H{(\vec p)}$, and the ambiguity in the preparation is reflected in an ambiguity in the Shannon entropy.
In particular, the von Neumann entropy~(\ref{vnentropyformula}) is the Shannon entropy associated with the spectral decomposition of $\rho$.

The problem of the ambiguity in the ensemble preparation was studied in a seminal paper by Schr\"odinger~\cite{schrodinger1935discussion}, who found a relation between all the preparations of a state, that is  all the possible decompositions of a density matrix into vector states. He proved that for any decomposition~(\ref{decomp}) there exists an $n\times n$ unitary matrix $U$ such that
\begin{equation}\label{shrodingerthm}
p_i=\sum_{j=1}^N \left|{U_{ij}}\right|^2 \lambda_j,
\end{equation}
for all $i=1,\dots,n$,
where $\lambda_j$ are the eigenvalues of the density matrix $\rho$.
Conversely, for any unitary matrix $U$ it is possible to find a decomposition in the form~(\ref{decomp}) such that~\eqref{shrodingerthm} holds.

Notice that, since $U$ is unitary, the matrix $B$ with entries $B_{ij}=\left|{U_{ij}}\right|^2$ is a doubly stochastic matrix, and in particular $\sum_{i=1}^n B_{ij}=1$. Thus the probability vector $\vec p$ is a randomization of the probability vector $\vec \lambda$, through a stochastic process,  namely $\vec p = B \vec \lambda$.

From~(\ref{shrodingerthm}), using the concavity of $h(p)=-p \log p$, we get
\begin{equation}
\mathrm{H}(\vec p)=  \sum_{i} h(p_i )=
\sum_{i} h \Bigl(\sum_j  B_{ij}   \lambda_j\Bigr)
\geqslant
\sum_{i} \sum_j  B_{ij} h(\lambda_j)
=  \sum_j   h(\lambda_j)  = \mathrm{H}(\vec \lambda),
\end{equation}
that is
\begin{equation}\label{shrneum}
\mathrm{H}(\vec p)\geqslant \mathrm{H}(\vec \lambda)=\mathcal{S}_{\mathrm{VN}}(\rho).
\end{equation}

This is a very interesting result, as the von Neumann entropy of a density matrix $\rho$ can be characterized in terms of Shannon entropies of its decompositions into vector states, as the most ordered decomposition, that is the decomposition with the smallest Shannon entropy:
\begin{equation}
\mathcal{S}_{\mathrm{VN}}(\rho) = \inf{ \Big\{\mathrm{H}(\vec{p})\,:\, \rho=\sum_i  p_i |{\phi_i}\rangle\langle{\phi_i}| \Big\}}.
\label{eq:minprop}
\end{equation}
Since convex decomposition into extremal states is a broader concept than orthogonal decomposition, this minimality property suggests a possible definition of entropy for points in a generic convex set, and  in particular for states over a \textit{C*}-algebra. 

\subsection{Majorization relation}

There is a  profound link between the Shannon entropy and the randomness of a probability vector, that sheds light upon equation~\eqref{shrodingerthm}, by giving a partial ordering on the set of probability vectors: the majorization relation~\cite{bhatia2013matrix,marshall1979inequalities}. 
Given two probability vectors $\vec{p}$ and $\vec{q}$ of length $n$, we say that $\vec{p}$ \textit{majorizes} $\vec{q}$ if
\begin{equation}
\sum_{i=1}^k p^{\downarrow}_i\geqslant\sum_{i=1}^k q^{\downarrow}_i, \qquad \forall k=1,\dots n-1,
\end{equation}
and we write
\begin{equation}
\vec{p}\succ\vec{q}.
\end{equation}
Here, $\vec{p}^{\downarrow}$ is the permutation of $\vec p$ such that $p^{\downarrow}_1\geqslant p^{\downarrow}_2\geqslant\dots\geqslant p^{\downarrow}_N$.

The majorization relation is related to the disorder content of a probability vector. For example every probability vector $\vec{p}$ is always in the relation
\begin{equation}
\vec{p}_{\mathrm{det}}\succ \vec{p}\succ\vec{p}_{\mathrm{unif}},
\end{equation}
with respect to the deterministic vector $\vec{p}_{\mathrm{det}}= (1,0,\dots,0)$ and the maximally random probability vector $\vec{p}_{\mathrm{unif}}= (1/n,1/n,\dots,1/n)$. Notice, however, that it can happen that two probability vector $\vec{p}$ and $\vec{q}$ cannot be compared, that is neither $\vec{p}\succ\vec{q}$ nor $\vec{q}\succ\vec{p}$ hold.

Nevertheless, one can prove that $\vec{p}\succ\vec{q}$ if and only if $\vec{q}$ is a randomization of $\vec{p}$, that is $\vec{q}= B \vec{p}$ for some  double stochastic matrix $B$~\cite{hardy,horn1954doubly}. 
Due to the above properties, the majorization relation and its connection with Shannon and von Neumann entropies have proved to play an important role in the quantum resource theories of entanglement~\cite{Nielsen99,Cunden20}
and of quantum coherence~\cite{Winter16,Chitambar16,Cunden21}

In terms of majorization, one can restate Schr\"odinger's theorem~\eqref{shrodingerthm} by saying that the spectral decomposition of a density matrix majorizes all its possible decompositions:
\begin{equation}
\vec{\lambda}\succ \vec{p}.
\end{equation}
Moreover, the Shannon entropy~\eqref{eq:shannondef} is a \textit{Shur concave} function~\cite{marshall1979inequalities,bengtsson2017geometry}, that is if $\vec{\lambda} \succ \vec p$, whence we have
\begin{equation}
\mathrm{H}(\vec{p})\geqslant \mathrm{H}(\vec{\lambda}),
\end{equation}
that is inequality~\eqref{shrneum}. In this sense Shannon entropy is a measure of  disorder.

In the next section, motivated by this minimality property, we will define the entropy of a generic state over a \textit{C*}-algebra as the minimal Shannon entropy over all its possible decompositions into extremal states. By Schr\"odinger's theorem, this quantum entropy will reduce to the von Neumann entropy in the quantum mechanical case.

\section{Entropy of states over a \textit{C*}-algebra}\label{chapentrcstaralgeb}

By mirroring the minimality property~\eqref{eq:minprop}, we now give a definition of entropy for states over an algebra of observables.
Given a finite-dimensional $C^*$-algebra $\mathfrak{A}$, the set of states over $\mathfrak{A}$ is a finite-dimensional convex compact set.
We define the \emph{entropy of a state $\omega$} to be the minimal Shannon entropy among its possible decompositions into pure states, namely
\begin{equation}\label{entropydef}
\mathcal{S}(\omega)=\inf{ \Big\{\mathrm{H}(\vec{p})\, : \, \omega=\sum_i p_i \omega_i, \;\,  \text{$\vec p$ probability vector, \; $\omega_i$ pure states} \Big\}}.
\end{equation}

In the following we will study the properties of this entropy and,  by representing the algebra on a Hilbert space, we will investigate the implications of this formula and its physical interpretations. Different features can be obtained from inequivalent representations of the $C^*$-algebra $\mathfrak{A}$.

Given a representation $(\mathcal{H},\pi)$, it is known that the image $\pi(\mathfrak{A})$ is a $C^*$-subalgebra of the operator algebra $\mathcal{B(H)}$~\cite{bratteli2012operator}. However, we cannot represent any state $\omega$ of the original algebra as a state over $\pi(\mathfrak{A})$. Consider the \textit{representative} state
\begin{align}\label{representationstate}
\omega_\pi:\pi(\mathfrak{A})&\rightarrow \mathbb{C},\\
\pi(A)&\mapsto \omega(A).
\end{align}
This definition makes sense if and only if, for $B\in\mathfrak{A}$:
\begin{equation}\label{condreprstate}
\pi(B)=0 \Rightarrow \omega(B)=0.
\end{equation}
This condition is fulfilled in a \textit{faithful} representation, where by definition $\pi(B)=0$ if and only if $B=0$. Condition \eqref{condreprstate} is also fulfilled in the GNS representation associated with the state $\omega$, where $\pi_{\omega}(A)=0$ implies that $\omega(A)=\braket{\Omega_\omega|\pi_\omega(A)\Omega_\omega}=0$. 
In the following we will compute the entropy~(\ref{entropydef}) using a faithful representation (and later the GNS representation), and will exhibit its connection with the von Neumann entropy of a distinguished representative density matrix in that representation.

\subsection{States over a \textit{C*}-algebra of operators}
In this section, we show that states can be uniquely characterized by density matrices when we deal with a finite dimensional algebra of operators. Moreover, we prove that there exists a unique representative density matrix which is also an element of the algebra. 
\begin{thm}\label{reprstate}
Let $\mathfrak{A}$ be a $C^*$-algebra of operators over a finite-dimensional Hilbert space
\begin{equation}
\mathfrak{A} \subset \mathcal{B}(\mathcal{H}), \qquad \mathrm{dim}\ \mathcal{H}=n< \infty,
\end{equation}
and let $\omega$ be a state over $\mathfrak{A}$. Then, there exists a unique density matrix belonging to the algebra, $\rho_\omega\in \mathfrak{A}$, such that
\begin{equation}\label{statedensitymatrix}
\omega(A) = \operatorname{Tr}(\rho_\omega A), \quad \forall A\in \mathfrak{A}.
\end{equation}
\end{thm}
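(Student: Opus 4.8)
My plan is to prove existence and uniqueness separately. Uniqueness is essentially a Riesz representation statement and I would dispatch it first; existence I would obtain constructively from the Structure Theorem~\ref{structuttheorzero}, which has the bonus of making the positivity of $\rho_\omega$ transparent.

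For uniqueness I would regard the finite-dimensional vector space $\mathfrak{A}$ as a Hilbert space with the Hilbert--Schmidt inner product $\langle X,Y\rangle=\operatorname{Tr}(X^*Y)$. If $\rho_1,\rho_2\in\mathfrak{A}$ both satisfy \eqref{statedensitymatrix}, then $D:=\rho_1-\rho_2\in\mathfrak{A}$ obeys $\operatorname{Tr}(DA)=0$ for every $A\in\mathfrak{A}$; since $\mathfrak{A}$ is closed under the involution I may take $A=D^*$ and obtain $\|D\|_{\mathrm{HS}}^2=\operatorname{Tr}(DD^*)=0$, hence $\rho_1=\rho_2$. Note that this same inner product already yields, for free, a \emph{Hermitian} element of $\mathfrak{A}$ representing $\omega$, but not its positivity, so existence deserves a separate, more structural argument.

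For existence I would invoke Theorem~\ref{structuttheorzero}. After discarding the subspace on which $\mathfrak{A}$ acts as zero I may assume the representation is non-degenerate, so that up to a unitary $\mathcal{H}=\bigoplus_{\alpha=1}^{k}\big(\mathbb{C}^{n_\alpha}\otimes\mathbb{C}^{m_\alpha}\big)$ with $\mathfrak{A}=\bigoplus_\alpha M_{n_\alpha}^{(m_\alpha)}$, where $M_{n_\alpha}^{(m_\alpha)}=\{X\otimes\mathbb{I}_{m_\alpha}:X\in M_{n_\alpha}\}$, and $\mathfrak{A}\cong\bigoplus_\alpha M_{n_\alpha}$. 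Restricting $\omega$ to the block $M_{n_\alpha}$ gives, by the elementary classification of positive functionals on a full matrix algebra, positive operators $\sigma_\alpha\in M_{n_\alpha}$ with $\omega\big(\bigoplus_\alpha X_\alpha\big)=\sum_\alpha\operatorname{Tr}(\sigma_\alpha X_\alpha)$ and $\sum_\alpha\operatorname{Tr}\sigma_\alpha=\omega(\mathbb{I})=1$. I then propose the candidate
\[
\rho_\omega=\bigoplus_{\alpha=1}^{k}\frac{1}{m_\alpha}\,\sigma_\alpha\otimes\mathbb{I}_{m_\alpha},
\]
which visibly lies in $\mathfrak{A}$, is positive, and has $\operatorname{Tr}\rho_\omega=\sum_\alpha\operatorname{Tr}\sigma_\alpha=1$; a one-line computation with $A=\bigoplus_\alpha X_\alpha\otimes\mathbb{I}_{m_\alpha}$ then gives $\operatorname{Tr}(\rho_\omega A)=\sum_\alpha\frac{1}{m_\alpha}\operatorname{Tr}(\sigma_\alpha X_\alpha)\operatorname{Tr}\mathbb{I}_{m_\alpha}=\sum_\alpha\operatorname{Tr}(\sigma_\alpha X_\alpha)=\omega(A)$, which is \eqref{statedensitymatrix}.

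The one place that needs care — and the reason I favor the constructive route — is the positivity of $\rho_\omega$ as an operator on $\mathcal{H}$: when $\mathfrak{A}$ is a proper subalgebra the defining inequality $\omega(A^*A)\geqslant 0$ does not by itself force $\operatorname{Tr}(\rho_\omega B)\geqslant 0$ for all positive $B\in\mathcal{B}(\mathcal{H})$, so a bare Riesz argument leaves a gap, whereas assembling $\rho_\omega$ from the already-positive blocks $\sigma_\alpha$ closes it at once. The only genuine subtlety in the bookkeeping is the multiplicity factor $1/m_\alpha$: it is exactly what keeps $\sigma_\alpha\otimes\mathbb{I}_{m_\alpha}$ inside $M_{n_\alpha}^{(m_\alpha)}$ while compensating the $m_\alpha$-fold trace, and — as will be seen later — it is this factor that is responsible for the discrepancy between $\mathcal{S}(\omega)$ and the von Neumann entropies of the other density matrices representing $\omega$. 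A representation-free alternative would be to extend $\omega$ to a state on $\mathcal{B}(\mathcal{H})$ and compose with the trace-preserving conditional expectation onto $\mathfrak{A}$, but the explicit form above is the one that feeds directly into the later computation of $\mathcal{S}(\omega)$.
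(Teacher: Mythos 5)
Your proposal is correct, and your uniqueness argument (Hilbert--Schmidt inner product plus $A=D^*$) is exactly the paper's Riesz-lemma step. For existence, however, you take a genuinely different route: the paper never invokes the Structure Theorem here. It takes the Riesz representative $\rho_\omega\in\mathfrak{A}$ with $\omega(A)=\operatorname{Tr}(\rho_\omega^\dagger A)$, shows $\rho_\omega=\rho_\omega^\dagger$ from the reality of $\omega$ on positive elements, and then closes precisely the ``gap'' you flag by an intrinsic trick: the eigenprojections of the self-adjoint element $\rho_\omega$ are Lagrange interpolation polynomials in $\rho_\omega$, namely $P_i=\prod_{j\neq i}(\rho_\omega-\lambda_j)/(\lambda_i-\lambda_j)$, hence lie in $\mathfrak{A}$; since $P_i=P_i^\dagger P_i$ one gets $0\leqslant\omega(P_i)=\lambda_i\,\mathrm{dim}\,\mathcal{H}_i$, so all eigenvalues are nonnegative and $\rho_\omega\geqslant 0$ as an operator on all of $\mathcal{H}$ even though $\omega$ is only assumed positive on $\mathfrak{A}$. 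So your caveat that ``a bare Riesz argument leaves a gap'' is fair as stated, but the gap is closable without leaving the abstract algebra, which is what the paper does. Your structural construction buys something else: manifest positivity, and the explicit block form $\rho_\omega=\bigoplus_\alpha\frac{1}{m_\alpha}\sigma_\alpha\otimes\mathbb{I}_{m_\alpha}$, which the paper only derives afterwards (its equation for the canonical decomposition of $\rho_\omega$, with $p_\alpha\rho_\alpha=\sigma_\alpha$) and which indeed feeds directly into the entropy computation; the price is dependence on the classification of finite-dimensional $C^*$-algebras, whereas the paper's spectral-projection argument is more elementary and more likely to suggest the right generalization to normal states in infinite dimensions. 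One small bookkeeping point common to both routes: if $\mathfrak{A}$ is degenerate on $\mathcal{H}$, the unit of $\mathfrak{A}$ is a proper projection $P$ rather than $\mathbb{I}_{\mathcal{H}}$; your step of discarding the null subspace handles this, and the trace normalization still reads $\operatorname{Tr}(\rho_\omega)=\operatorname{Tr}(\rho_\omega P)=\omega(\mathbb{I})=1$ because $\rho_\omega\in\mathfrak{A}$.
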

\begin{proof}
In order to prove the existence of such an element, consider  the Hilbert-Schmidt inner product on $\mathcal{B}\left(\mathcal{H}\right)$,
\begin{equation}\label{hsproduct}
\langle{A}|{B}\rangle_{\mathrm{HS}}\equiv\operatorname{Tr} \big(A^{\dagger}B\big),
\end{equation}
which makes the subspace $\mathfrak{A}$ a Hilbert space.
From Riesz's lemma, for any functional $f\in \mathfrak{A}^*$ there exists a unique $\rho_f \in \mathfrak{A}$ such that:
\begin{equation}
f(A)=\operatorname{Tr}\big(\rho_f^{\dagger}A\big)\qquad \forall A\in \mathfrak{A}.
\end{equation}
In particular, given a state $\omega$ we get a unique operator $\rho_\omega\in \mathfrak{A}$ satisfying $\omega(A) = \operatorname{Tr}(\rho_\omega^\dag A)$ for all $A\in \mathfrak{A}$. 

We now prove that $\rho_\omega$ is a density matrix, that is
$\rho_\omega^{\dagger}=\rho_\omega$,  $\rho_\omega$ is positive, and $\operatorname{Tr}(\rho_\omega)=1$.

 If $B=A^{\dagger}A$ is positive, then
\begin{equation}
\omega(B)=\operatorname{Tr}\big(\rho_\omega^{\dagger} B\big)=\overline{\operatorname{Tr}(B^{\dagger}\rho_\omega)}=\operatorname{Tr}(\rho_\omega B),
\end{equation}
where we used the fact that $\omega(B)$ is real. Since every self-adjoint operator is a linear combination of two positive operators, and every operator is a linear combination of two self-adjoint operators, we have $\operatorname{Tr}(\rho_\omega^{\dagger} A)=\operatorname{Tr}(\rho_\omega A)$ for all $A \in \mathfrak{A}$, whence $\rho_\omega=\rho_\omega^\dagger$.

Since $\rho_\omega$ is self-adjoint, it can be written in its spectral decomposition $\rho_\omega=\sum_i \lambda_i P_i$, with $\lambda_i$ eigenvalues and $P_i$ eigenprojections. Since
\begin{equation}
P_i=\prod_{j \, : \, j\neq i} \frac{\rho_\omega-\lambda_j}{\lambda_i-\lambda_j}.
\end{equation}
we have $P_i \in \mathfrak{A}$ for all $i$. But then:
\begin{equation}
\omega(P_i)=\operatorname{Tr}(\rho_\omega P_i)=\lambda_i \mathrm{dim}\mathcal{H}_i\geqslant 0
\end{equation}
since $P_i=P_i^{\dagger}P_i$ is positive. Here, $\mathcal{H}_i$ is the eigenspace of the eigenvalue $\lambda_i$. Therefore, $\lambda_i\geqslant 0$, and $\rho_\omega$ is positive.

Finally, one has
\begin{equation}
\operatorname{Tr}(\rho_\omega) = \operatorname{Tr}(\rho_\omega \mathbb{I}) = \omega(\mathbb{I})=1.
\end{equation}
Therefore, $\rho_\omega$ is a density matrix.
\end{proof}

For an infinite-dimensional Hilbert space, only  a subclass of states, known as normal states, can be represented by a density matrix. In this setting, equation~\eqref{hsproduct} is not defined for all pairs of bounded operators, and one must recur instead to the duality between bounded operators and trace-class operators~\cite{bratteli2012operator}. 

Observe that, given a state $\omega$, different density matrices can be chosen to represent it. However, $\rho_\omega$ is the only density matrix which is also an element of the algebra $\mathfrak{A}$. So, we have a \textit{distinguished} representative density matrix, and we might think to define the entropy of our system as the von Neumann entropy of this density matrix. A natural question is to understand what is the relation between this von Neumann entropy and the entropy of a state given by formula~(\ref{entropydef}), and in particular whether
\begin{equation}\label{entropyalternative}
\mathcal{S}(\omega)=\mathcal{S}_{\mathrm{VN}}(\rho_\omega)
\end{equation}
holds or not. In the next section, we will study the entropy of a state~(\ref{entropydef}), and we will see that indeed~\eqref{entropyalternative} is true for a faithful and multiplicity-free representation. 

\subsection{Evaluation in a \textit{faithful} representation}
Let us consider a finite-dimensional $C^*$-algebra $\mathfrak{A}$ and a finite-dimensional faithful representation $(\mathcal{H},\pi)$, that is
\begin{equation}
\pi(A)=0 \; \Leftrightarrow \; A=0.
\end{equation}
Given a  state $\omega$ on $\mathfrak{A}$, it can be represented on $\pi(\mathfrak{A})$ by
\begin{equation}\label{dogsbark}
\omega_\pi\equiv\omega\circ \pi^{-1}.
\end{equation}
Let us decompose the representation into irreducible subrepresentations
\begin{equation}\label{decomprepr}
\left(\mathcal{H},\pi\right)=\bigoplus_{i=1}^N \left(\mathcal{H}_i^{(m_i)},\pi_i^{(m_i)}\right).
\end{equation}
Here, $(\mathcal{H}_i,\pi_i)$ are irreducible 
subrepresentations. The multiplicity of the subrepresentation $\pi_i$  is $m_i$, and
\begin{equation}
\mathcal{H}_i^{(m_i)}=\underbrace{\mathcal{H}_i\oplus\mathcal{H}_i\oplus\dots\oplus\mathcal{H}_i}_{m_i},\qquad \pi_i^{(m_i)}=\underbrace{\pi_i\oplus\pi_i\oplus\dots\oplus\pi_i}_{m_i}.
\end{equation}
The elements of $\pi(\mathfrak{A})$ have the form
\begin{equation}
X=\underbrace{X_1 \oplus X_1\oplus\dotsc\oplus X_1}_{m_1}\oplus \underbrace{X_2\oplus X_2\oplus\dotsc\oplus X_2}_{m_2}\oplus\dotsc\oplus \underbrace{X_N\oplus X_N\oplus\dotsc\oplus X_N}_{m_N},
\end{equation}
with $X_i$ spanning all $\mathcal{B}(\mathcal{H}_i)$, by the structure theorem -- see equation~(\ref{structuttheoruno}).
 
From representation~(\ref{decomprepr}), we can obtain another, more economical faithful representation of the form
\begin{equation}\label{reprnodeg}
\left(\tilde{\mathcal{H}},\tilde{\pi}\right)=\bigoplus_{i=1}^N \left(\mathcal{H}_i,\pi_i\right)
\end{equation}
where the multiplicities are $m_i=1$ for all $i$, thus eliminating all the redundancy of our description. For the moment, we stick with the general form~(\ref{decomprepr}), but we clearly expect that our results will not depend on the multiplicity $m_i$. 

We rewrite the decomposition~(\ref{decomprepr}) in the form
\begin{equation}\label{alernativedecomp}
\left(\mathcal{H},\pi\right)=\bigoplus_{i=1}^N \left(\mathcal{H}_i\otimes\mathbb{C}^{m_i},\pi_i\otimes\mathbb{I}_{m_i}\right).
\end{equation}
This follows by considering the unitary transformation which acts on each $\mathcal{H}_i^{(m_i)}$ as
\begin{equation}
\xi_1\oplus\xi_2\oplus\dots\oplus\xi_{m_i} \in \mathcal{H}_i^{(m_i)}\longleftrightarrow \xi_1\otimes e_1+\xi_2\otimes e_2+\dots\xi_{m_i}\otimes e_{m_i}\in\mathcal{H}_i\otimes\mathbb{C}^{m_i},
\label{ftn:unitarytransf}
\end{equation}
where $\{e_1,e_2,\dots, e_{m_i}\}$ is an orthonormal basis of $\mathbb{C}^{(m_i)}$.

Given a state $\omega$ over the $C^*$-algebra $\mathfrak{A}$, by Theorem~\ref{reprstate} we can consider the unique representative density matrix $\rho_\omega$ belonging to $\pi(\mathfrak{A})$ such that
\begin{equation}
\omega(A)=\operatorname{Tr}(\rho_\omega \pi(A)).
\end{equation}
Since $\rho_\omega$ is an element of the algebra, it has the form
\begin{equation}\label{decompstate}
\rho_\omega=p_1 \left(\rho_1\otimes \frac{\mathbb{I}_{m_1}}{m_1}\right) \oplus p_2\left( \rho_2\otimes \frac{\mathbb{I}_{m_2}}{m_2}\right) \oplus \dots  \oplus p_N\left( \rho_N\otimes \frac{\mathbb{I}_{m_N}}{m_N}\right),
\end{equation}
where $\rho_i$ are density matrices of $\mathcal{B}(\mathcal{H}_i)$, and $\vec{p}=(p_1, \dots, p_N)$ is a probability vector.
Conversely, any density matrix of the form~(\ref{reprstate}) defines a state over $\mathfrak{A}$.

Given two states $\omega_a$ and $\omega_b$, and their representative density matrices $\rho_{a}$ and $\rho_{b}$, we have
\begin{equation}\label{convexcombinequiv}
\omega=\lambda\omega_a+(1-\lambda) \omega_b \; \Leftrightarrow \;  \rho_\omega=\lambda\rho_{a}+(1-\lambda)\rho_{b}.
\end{equation}
Therefore, a state $\omega$ is pure if and only if its density matrix is pure with respect to decompositions in density matrices of $\pi(\mathfrak{A})$. 

Let $\rho_\omega$ be  a pure state, and let~(\ref{decompstate}) be its decomposition. Then, we must have that all $\rho_i=0$, except for one $i$. For example, if $\rho_1,\rho_2$ were both different from zero, then we could decompose $\rho$ into two other density matrices of $\pi(\mathfrak{A})$. Thus a pure state $\rho_\omega$ has the form
\begin{equation}\label{purestatesalgebra}
\rho_\omega=0 \oplus\dots  \oplus\left( |{\psi^{(i)}}\rangle\langle{\psi^{(i)}}| \otimes \frac{\mathbb{I}_{m_i}}{m_i}\right)\oplus \dots  \oplus 0,
\end{equation}
for some $i$, with $\psi^{(i)}$ being a unit vector of $\mathcal{H}_i$.

Given a state $\omega$ over $\mathfrak{A}$, let its representative $\rho_\omega$ be in the form~(\ref{decompstate}). Consider the spectral decomposition of each density matrix~$\rho_i$,
\begin{equation}
\rho_i=\sum_j \lambda_j^{(i)} |{\psi_j^{(i)}}\rangle\langle{\psi_j^{(i)}}|,
\end{equation}
and obtain a decomposition of the density matrix $\rho_\omega$ into pure states
\begin{align}
\rho_\omega&=\bigoplus_{i=1}^{N}p_i \left( \rho_i \otimes \frac{\mathbb{I}_{m_i}}{m_i}\right)=\bigoplus_{i=1}^{N}p_i \left(\sum_{j} \lambda_j^{(i)} |{\psi_j^{(i)}}\rangle\langle{\psi_j^{(i)}}|\otimes \frac{\mathbb{I}_{m_i}}{m_i}\right)\nonumber\\
&=\bigoplus_{i=1}^{N}\sum_j p_i  \lambda_j^{(i)} \left(|{\psi_j^{(i)}}\rangle\langle{\psi_j^{(i)}}|\otimes \frac{\mathbb{I}_{m_i}}{m_i}\right)=\sum_{ij}p_i \lambda_j^{(i)}\rho_j^{(i)}\label{minimaldecomp}
\end{align}
with
\begin{equation}\label{orthogonaldecomp}
\rho_j^{(i)}=0 \oplus\dots  \oplus\left( |{\psi_j^{(i)}}\rangle\langle{\psi_j^{(i)}}| \otimes \frac{\mathbb{I}_{m_j}}{m_j}\right)\oplus \dots  \oplus 0.
\end{equation}
The weights of this decomposition are $p_i \lambda_j^{(i)}$. We shall see that this is the minimal decomposition, i.e. having the minimal Shannon entropy as in definition~(\ref{entropydef}), which will be then the entropy $\mathcal{S}(\omega)$ of the state $\omega$.

Consider a generic decomposition of $\rho_\omega$ into pure states
\begin{equation}\label{gendecomp}
\rho_\omega=\sum_{ij} w_j^{(i)} \sigma_j^{(i)},
\end{equation}
with $\sigma_j^{(i)}$:
\begin{equation}
\sigma_j^{(i)}=0 \oplus\dots  \oplus\left( |{\varphi_j^{(i)}}\rangle\langle{\varphi_j^{(i)}}| \otimes \frac{\mathbb{I}_{m_i}}{m_i}\right)\oplus \dots  \oplus 0.
\end{equation}
We gathered the pure states so that $\sigma_j^{(i)}$ has support  in $\mathcal{H}_i^{(m_i)}$. We also define
\begin{equation}
v_j^{(i)}=\frac{w_j^{(i)}}{p_i},\quad v_j^{(i)}\geqslant 0,\quad\sum_j v_j^{(i)}=1,
\end{equation}
so that $\rho_\omega$ has the canonical form~\eqref{decompstate}, where for all $i$ we have a decomposition of $\rho_i$ in vector states:
\begin{equation}
\rho_i=\sum_j  v_j^{(i)}|{\varphi_j^{(i)}}\rangle\langle{\varphi_j^{(i)}}|.
\end{equation}
The Shannon entropy of the decomposition~(\ref{gendecomp}) is
\begin{align}\label{shannonentropydecomp}
\mathrm{H}(\vec w)&=-\sum_{ij}p_i v_j^{(i)} \log (p_iv_j^{(i)})\nonumber \\
&=-\sum_{ij} p_i v_j^{(i)}\log p_i-\sum_{ij} p_i v_j^{(i)}\log v_j^{(i)}\nonumber \\
&= \mathrm{H}(\vec{p}) +\sum_{i}p_i\mathrm{H}(\vec{v}^{(i)})\nonumber \\
&\geqslant \mathrm{H}(\vec{p})+\sum_{i}p_i\mathcal{S}_{\mathrm{VN}}(\rho_i).
\end{align}
Here $\mathcal{S}_\mathrm{VN}(\rho_i)$ is the von Neumann entropy of the density matrix $\rho_i$, which, by Schr\"odinger's theorem, is always smaller than the Shannon entropy of any other decomposition of $\rho_i$.

Now, the last line of~(\ref{shannonentropydecomp}) is also the Shannon entropy of the decomposition~(\ref{minimaldecomp}). Therefore, the entropy~(\ref{entropydef}) reads
\begin{equation}\label{finalformula}
S(\omega)=\mathrm{H}(\vec{p})+\sum_{i}p_i\mathcal{S}_{\mathrm{VN}}(\rho_i).
\end{equation}
This is our main result, that expresses the entropy of a state $\omega$ over an algebra $\mathfrak{A}$ in terms of the canonical decomposition~\eqref{decompstate} of its distinguished representative density matrix $\rho_\omega$ belonging to a faithful representation~\eqref{alernativedecomp} of $\mathfrak{A}$. The entropy $S(\omega)$ is given by the sum of two contributions: the Shannon entropy $\mathrm{H}(\vec{p})$ of the probability vector $\vec{p}$  of the weights of the component density matrices $\rho_i$ in the irreducible subrepresentations plus the average von Neumann entropy of these components. Notice that, as expected, the result does not depend on the arbitrary multiplicities $m_i$ of the representation.

On the other hand, the von Neumann entropy of the distinguished representative density matrix $\rho_\omega$ in the representation~\eqref{alernativedecomp} in general differs from the entropy~\eqref{finalformula} of the state $\omega$:
\begin{align}
\mathcal{S}_{\mathrm{VN}}(\rho_\omega)&=\mathrm{H}(\vec{p}) + \sum_i p_i \mathcal{S}_{\mathrm{VN}}\left(\rho_i\otimes \frac{\mathbb{I}_{m_i}}{m_i}\right)\nonumber \\
&=\mathrm{H}(\vec{p}) + \sum_i p_i \left(\mathcal{S}_{\mathrm{VN}}(\rho_i)+ \log m_i\right)\nonumber \\
&= \mathcal{S}(\omega)+\sum_{i=1}^N p_i \log m_i.
\end{align}
Indeed, it contains an additional entropic term due to the redundancy of the representation, that is the presence of multiplicities $m_i$.

The equality between the two entropies is restored if one considers the most economical representation with no multiplicities~\eqref{reprnodeg}. In such a case the entropy of the state $\omega$ is equal to the von Neumann entropy of its distinguished representative density matrix $\rho_\omega$ and equality~\eqref{entropyalternative} holds.
This observation has a major consequence: since $S(\omega)$ is the von Neumann entropy of the representative density matrix of a representation with no multiplicities, it is a \textit{bona fide} entropy and possesses all the desired thermodynamic properties; in particular, by equation~(\ref{convexcombinequiv}), it is a concave function.

We have proved the following theorem which gathers our main results:
\begin{thm}[Entropy of a quantum state]\label{mainthm}
 Let  $\mathfrak{A}$ be a finite dimensional $C^*$-algebra. For any state $\omega$ over $\mathfrak{A}$ define its entropy as
 \begin{equation}
\mathcal{S}(\omega)=\inf{ \Big\{\mathrm{H}(\vec{p})\, : \, \omega=\sum_i p_i \omega_i, \;\,  \text{$\vec p$ probability vector, \; $\omega_i$ pure states} \Big\}}.
\end{equation}
Then $\omega\mapsto \mathcal{S}(\omega)$ is a nonnegative concave function which vanishes on pure states.

Moreover, let $\left(\mathcal{H},\pi\right)$ be a faithful finite-dimensional and multiplicity-free representation of $\mathfrak{A}$. Given a state $\omega$, let $\rho_\omega \in \pi(\mathcal{H})$ be the unique density matrix such that $\omega(A) = \operatorname{Tr}\bigl(\rho \pi(A)\bigr)$ for all $A\in\mathfrak{A}$.
 Then one has
\begin{equation}
\mathcal{S}(\omega)=\mathcal{S}_{\mathrm{VN}}(\rho_\omega),
\end{equation}
where $\mathcal{S}_{\mathrm{VN}}(\rho_\omega)= -\operatorname{Tr}(\rho \log \rho)$ is the von Neumann entropy of $\rho_\omega$.
\end{thm}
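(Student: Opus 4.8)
The plan is to work inside a faithful finite-dimensional representation of $\mathfrak{A}$ and reduce everything to the already-established facts about von Neumann and Shannon entropy. First I would fix such a representation $(\mathcal{H},\pi)$ and, using the structure theorem (Theorem~\ref{structuttheorzero}), put it in the form~\eqref{alernativedecomp}, a direct sum of irreducible blocks $\pi_i\otimes\mathbb{I}_{m_i}$ with multiplicities $m_i$. By Theorem~\ref{reprstate} the state $\omega$ has a unique representative density matrix $\rho_\omega\in\pi(\mathfrak{A})$, and because it lies in the algebra it must take the block form~\eqref{decompstate}, $\rho_\omega=\bigoplus_i p_i(\rho_i\otimes\mathbb{I}_{m_i}/m_i)$ with $\vec p$ a probability vector and each $\rho_i$ a density matrix on $\mathcal{H}_i$.

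Next I would establish the correspondence between convex decompositions: by~\eqref{convexcombinequiv} the affine bijection $\omega\mapsto\rho_\omega$ intertwines convex combinations of states with convex combinations of representative density matrices, so decompositions of $\omega$ into pure states correspond exactly to decompositions of $\rho_\omega$ into pure representative density matrices. I would then identify the pure states: an extremal $\rho_\omega$ cannot be split within $\pi(\mathfrak{A})$, which forces all but one $p_i$ to vanish and the surviving $\rho_i$ to be a rank-one projection, giving the form~\eqref{purestatesalgebra}. Consequently every pure-state decomposition of $\rho_\omega$ can be grouped by the block in which each pure state is supported; restricting to block $i$ and taking traces yields weights $w_j^{(i)}=p_i v_j^{(i)}$ with $\vec v^{(i)}$ a probability vector and $\rho_i=\sum_j v_j^{(i)}|\varphi_j^{(i)}\rangle\langle\varphi_j^{(i)}|$.

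The computation then proceeds as in~\eqref{shannonentropydecomp}: the Shannon entropy of any such decomposition splits as $\mathrm{H}(\vec w)=\mathrm{H}(\vec p)+\sum_i p_i\mathrm{H}(\vec v^{(i)})$, and by Schr\"odinger's theorem (inequality~\eqref{shrneum}) each $\mathrm{H}(\vec v^{(i)})\geqslant\mathcal{S}_{\mathrm{VN}}(\rho_i)$, with equality attained by choosing the spectral decomposition of $\rho_i$. Hence the infimum in~\eqref{entropydef} equals $\mathrm{H}(\vec p)+\sum_i p_i\mathcal{S}_{\mathrm{VN}}(\rho_i)$ and is achieved by the decomposition~\eqref{minimaldecomp}. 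For a multiplicity-free representation ($m_i=1$, which exists by the structure theorem) the eigenvalues of $\rho_\omega=\bigoplus_i p_i\rho_i$ are exactly the numbers $p_i\lambda_j^{(i)}$, so $\mathcal{S}_{\mathrm{VN}}(\rho_\omega)=-\sum_{ij}p_i\lambda_j^{(i)}\log(p_i\lambda_j^{(i)})=\mathrm{H}(\vec p)+\sum_i p_i\mathcal{S}_{\mathrm{VN}}(\rho_i)=\mathcal{S}(\omega)$, which is the claimed identity. Nonnegativity is immediate since Shannon entropy is nonnegative; vanishing on pure states follows from the trivial one-term decomposition (equivalently, $\rho_\omega$ is a rank-one projection in the multiplicity-free picture); and concavity follows at once because $\mathcal{S}=\mathcal{S}_{\mathrm{VN}}\circ(\omega\mapsto\rho_\omega)$ is the composition of the concave von Neumann entropy with an affine map.

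The main obstacle I anticipate is the organizational step of the second paragraph: one must argue cleanly that an arbitrary, a priori unstructured, convex decomposition of $\omega$ into pure states can, without loss of generality, be taken block-diagonal, so that each pure state in it is supported in a single irreducible block and the entropy genuinely decouples. This rests on the precise characterization~\eqref{purestatesalgebra} of pure states over $\mathfrak{A}$, which in turn uses the structure theorem; once this is in hand the remainder is the concavity estimate and a short eigenvalue bookkeeping. A minor point to handle with care is the case $p_i=0$, where the corresponding block contributes nothing and the terms $p_i\log p_i$ and $p_i\mathcal{S}_{\mathrm{VN}}(\rho_i)$ are interpreted as zero.
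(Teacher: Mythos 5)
Your proposal is correct and follows essentially the same route as the paper: structure theorem, unique representative density matrix in block form, identification of pure states as single-block rank-one projections, the splitting $\mathrm{H}(\vec w)=\mathrm{H}(\vec p)+\sum_i p_i\mathrm{H}(\vec v^{(i)})$ bounded below via Schr\"odinger's theorem, and the multiplicity-free identification with $\mathcal{S}_{\mathrm{VN}}(\rho_\omega)$ together with concavity via the affine correspondence $\omega\mapsto\rho_\omega$. The ``obstacle'' you flag is resolved exactly as you suggest: purity forces support in a single irreducible block, so every decomposition is automatically block-organized.
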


\subsection{Thermodynamic considerations}

In this section we will discuss the physical motivations of the definition~\eqref{entropydef} for the entropy of a quantum state $\omega$.
We will make use of thermodynamic considerations by extending to the algebraic framework von Neumann's beautiful argument, based on the notions of Einstein's gas and semipermeable walls~\cite{von1996mathematical,Peres2002}.
To this purpose, some preliminary considerations are necessary.

There is no immediate definition of eigenstates in the algebraic approach, and yet they are key ingredients in von Neumann's thermodynamic considerations. Instead, we can consider  states that have a \textit{definite value} for a given observable. If a state $\omega_a$ has a definite value for an observable $A$, every measurement of this observable will yield the same value $a$ on it. This can be expressed by saying that $\omega_a(A)=a$ and its variance is zero:
\begin{equation}\label{definitevalue}
\omega_a((A-a)^2)=0.
\end{equation}
Furthermore, we assume that this property is stable in the sense that if a second measurement of the same observable is performed just after the first, the same result is obtained. 

In the following we will consider the faithful representation $(\mathcal{H},\pi)$ of a finite dimensional $C^*$-algebra $\mathfrak{A}$, without multiplicities, as given by~(\ref{reprnodeg}), namely
\begin{equation}
(\mathcal{H},\pi)=\bigoplus_{i=1}^{N} (\mathcal{H}_i,\pi_i)
\label{eq:multfree}
\end{equation}
with $(\mathcal{H}_i,\pi_i)$ being irreducible 
sub-representations.
Consider an observable $A=A^*\in\mathfrak{A}$ and let $\pi(A)=\pi(A)^\dag$ be its representative.
Let $\left(\varphi_i\right)_i$ be its eigenstates with eigenvalues $\left(a_i\right)_i$ and suppose that $A$ (and thus $\pi(A)$) has nondegenerate spectrum, that is $a_i\neq a_j$ for $i\neq j$. 
Now, if 
the density matrix $\rho_a\in\pi(\mathfrak{A})$ is the representative of the state $\omega_a$ then $\rho_a=|{\varphi_j}\rangle \langle{\varphi_j}|$ for some $j$, and $a=a_j$. Indeed, equation~\eqref{definitevalue} reads
\begin{align}
\operatorname{Tr}[\rho_a(\pi(A)- a)^2]&=\operatorname{Tr}\left[\rho_a\left(\sum_i a_i |{\varphi_i}\rangle\langle{\varphi_i}|-a \sum_i |{\varphi_i}\rangle\langle{\varphi_i}|\right)^2\right]\nonumber \\
&=\sum_i \langle{\varphi_i}|{\rho_a\varphi_i}\rangle\left(a_i-a\right)^2=0.
\end{align}
Therefore $\rho_a$ has no support on  $\varphi_i$  whenever $a_i\neq a$. As a result, $a=\operatorname{Tr}[\rho \pi(A)]$ is an eigenvalue of $\pi(A)$, say $a=a_j$ for some $j$, and $\rho_a$ is supported on its eigenspace. Thus we have
\begin{equation}
\rho_a =
|{\varphi_j}\rangle\langle{\varphi_j}|. 
\end{equation}

We are now ready to apply von Neumann's argument. We have seen in the previous sections that by considering the faithful multiplicity-free representation~\eqref{eq:multfree} there is a one to one correspondence between states $\omega$ over $\mathfrak{A}$ and density matrices $\rho_\omega$ belonging to $\pi(\mathfrak{A})$, and pure states over $\mathfrak{A}$ correspond to vector states $\ket{\psi}\!\bra{\psi}$ belonging to $\pi(\mathfrak{A})$, which, by the above argument, are states with a definite value for a suitable nondegenerate observable.
Moreover, we have seen that the entropy of any state $\omega$ is equal to the von Neumann entropy of its distinguished representative $\rho_\omega$, as in equality~\eqref{entropyalternative}. Therefore, the strategy will be to use von Neumann's argument on the representation $\pi(\mathfrak{A})$.

Consider an ensemble of $M$ copies of a system prepared in a state $\omega$, represented by the density matrix $\rho\in \pi(\mathfrak{A})$. If $M$ is large enough, we expect the system to follow the laws of thermodynamics.
In order to obtain the entropy of the system, we need to evaluate the heat exchanged along a reversible transformation that brings the system from a reference state $\omega_0$, whose entropy $\mathcal{S}_0$ is assigned, to the state $\omega$. The entropy will be given by
\begin{equation}
\mathcal{S}_{\mathrm{gas}}=\mathcal{S}_0+\int_{\omega_0}^\omega\frac{dQ}{T}.
\end{equation}

In quantum mechanics, one chooses pure states as the reference states, and sets $\mathcal{S}_0=0$. In fact, it can be proved that pure states are isoentropic, and that two pure states can be connected adiabatically~\cite{von1996mathematical}. We are going to see that this is in general not true in the algebraic description, and that there are states that cannot be transformed into each other in this way.

Let us recall von Neumann's argument, which makes a clever use of a peculiar feature of quantum mechanics, later on named ``quantum Zeno effect''~\cite{misra1977zeno,facchi08}.
Consider two orthogonal vectors $\varphi$ and $\psi$ in $\mathcal{H}$. We explicitly construct the  adiabatic transformation from $\varphi$ to $\psi$. Fix an integer $k$, and define for $\nu=0,1,\dots,k$
\begin{equation}
\psi^{(\nu)}=\cos \left({\frac{\pi \nu}{2 k}}\right)\varphi+\sin\left({\frac{\pi \nu}{2k}}\right)\psi.
\label{eq:psinudef}
\end{equation}
with $\psi^{(0)}=\varphi$ and $\psi^{(k)}=\psi$.
Consider a family of non-degenerate self-adjoint operators $B^{(\nu)}$ such that $\psi^{(\nu)}$ is one of the possible eigenvectors. By measuring in sequence the observables corresponding to $B^{(1)},B^{(2)},\dots,B^{(k)}$ on the vector state $|{\varphi}\rangle\langle{\varphi}|$ one gets
\begin{equation}
|{\varphi}\rangle\langle{\varphi}|\xrightarrow{B^{(1)}}\rho^{(1)}\xrightarrow{B^{(2)}}\rho^{(2)}\xrightarrow{B^{(3)}}\dotsc\xrightarrow{B^{(k)}}\rho^{(k)}
\end{equation}
The fraction of states that goes from $\psi^{(\nu-1)}$ to $\psi^{(\nu)}$ in the measurement of $B^{(\nu)}$ is
\begin{equation}
\mathrm{P}(\psi^{(\nu-1)}\rightarrow\psi^{(\nu)})=|{\langle{\psi^{(\nu-1)}}|{\psi^{(\nu)}}\rangle}|^2=\cos^2\left(\frac{\pi}{2k}\right)
\end{equation}
and
\begin{equation}
\mathrm{P}(\varphi\rightarrow\psi)\geqslant\cos^{2k}\left(\frac{\pi}{2k}\right)\sim\left(1-\frac{\pi^2}{8k^2}\right)^{2k}\xrightarrow{k\rightarrow\infty} 1,
\end{equation}
so that for large $k$ we have a  transformation of $\varphi$ into $\psi$ with probability one.
Assuming that in the measurement no heat exchange occurs, we have:
\begin{equation}
\mathcal{S}(\ket{\psi}\!\bra{\psi})\geqslant \mathcal{S}(\ket{\varphi}\!\bra{\phi}).
\end{equation}
Since the transformation can be repeated in the opposite direction $\psi\rightarrow\varphi$, we get
\begin{equation}
\mathcal{S}(\ket{\psi}\!\bra{\psi})=\mathcal{S}(\ket{\varphi}\!\bra{\phi}).
\end{equation}
This proof works in quantum mechanics, where the algebra of observables is the full algebra $\mathcal{B}(\mathcal{H})$,
but has problems for a generic algebra $\mathfrak{A}$ subject to selection rules, whose representation $\pi(\mathfrak{A})$ is a proper subalgebra of $\mathcal{B}(\mathcal{H})$.

In order for the operator $B^{(\nu)}$ to be the representative of an observable, we need $|{\psi^{(\nu)}}\rangle\langle{\psi^{(\nu)}}|$ to be in $\pi(\mathfrak{A})$ for all $\nu$. Since pure states are vector states in a subspace $\mathcal{H}_i$ of~\eqref{eq:multfree}, $|{\psi^{(\nu)}}\rangle\langle{\psi^{(\nu)}}|$ are elements of $\pi(\mathfrak{A})$ if and only if the vectors $\psi$ and $\varphi$ in~\eqref{eq:psinudef} belong to the same Hilbert space $\mathcal{H}_i$. Only in this case we can prove that they are isentropic. Otherwise, they  cannot be transformed into each other by the procedure described above, and we cannot compare their entropies.
Physically, they represent pure states belonging to disjoint phases (or sectors) that cannot be connected by any physical operation.

We then call $s_1,s_2,\dots, s_N$ the entropies of the pure states whose representatives are in $\mathcal H_1,\mathcal{H}_2,\dots,\mathcal{H}_N$, respectively. From the entropy of pure states, we are going  to obtain the entropy of a generic mixed state.
We need to consider a reversible process that brings the ensemble to a final pure state. This is performed by introducing the concept of Einstein's gas: the copies of the quantum system are inserted into boxes $\mathcal{K}_i$ (a box for each copy), that are so thick and massive that the state of the system $\omega$ will not be affected by the motion of the boxes.
We then insert all these boxes into a larger box ${\mathcal{K}}$, that will be kept in contact with a \textit{reservoir} $\mathcal{R}$ at temperature $T$. The boxes will behave like a perfect gas if the temperature $T$ is high enough.

Consider the spectral decomposition  of the density matrix $\rho$ corresponding to the state $\omega$ in the representation $\pi$. We get that the decomposition
\begin{equation}
\rho=\sum_{i=1}^{N} \left(\sum_{j}p_i\lambda_j^{(i)} |{\psi_j^{(i)}}\rangle\langle{\psi_j^{(i)}}|\right),
 \qquad p_i,\lambda_j^{(i)}>0,\quad \sum_{ij}{p_i\lambda_j^{(i)}}=1,
\end{equation}
with $|{\psi^{(i)}_j}\rangle\langle{\psi_j^{(i)}}|\in \pi_i(\mathfrak{A})$,
corresponds to the decomposition into pure states of $\omega$,
\begin{equation}
\omega = \sum_{i,j} p_i\lambda_j^{(i)} \omega_{j}^{(i)},
\end{equation}
where the index $i$ labels different sectors. Define the non-degenerate self-adjoint operator
\begin{equation}
B=\sum_{i=1}^{N} \sum_{j} a_j^{(i)} |{\psi_j^{(i)}}\rangle\langle{\psi_j^{(i)}}|\in \pi(\mathfrak{A}), \quad \textnormal{with } a_j^{(i)}\neq a_k^{(\ell)} \textnormal{ for }i,j\neq \ell, k,
\end{equation}
representing the observable $A$, i.e.\ $B=\pi(A)$, and  for which $a_j^{(i)}$ are the possible outcomes of a measurement, and $\psi_j^{(i)}$ are the associated eigenvectors.

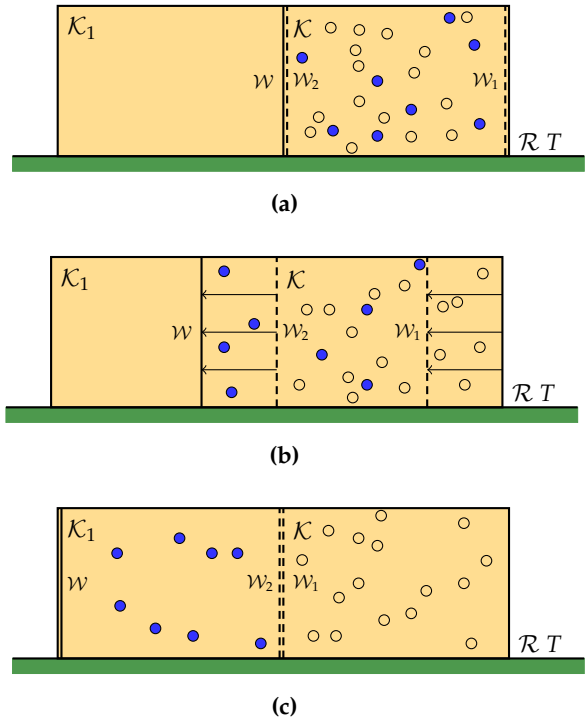
\begin{figure}[h!tb]
	\centering
	\begin{subfigure}{1\textwidth}\hspace{.27\textwidth}
			\begin{tikzpicture}
			\node[right] at (3,0.2) {\small $\mathcal{R}\ T$};
			\filldraw[greenlink] (-3.6,0)--(4,0)--(4,-0.2)--(-3.6,-0.2)--cycle;
			\draw[thick,fill=yellow!50!orange!50] (-3,2)--(-3,0)--(3,0)--(3,2)--cycle;
			\draw[thick,densely dashed] (2.95,0)--(2.95,2);
			\draw[thick,densely dashed] (0.05,0)--(0.05,2);
			\draw[thick] (0,0)--(0,2);
			\draw[thick] (-3.6,0)--(4,0);
			\node[right] at (0,1.7) {\small $\mathcal{K}$};
			\node[right] at (-3,1.7) {\small $\mathcal{K}_1$};
			\node[right] at (0,1) {\scriptsize $\mathcal{W}_2$};
			\node[left] at (3,1) {\scriptsize $\mathcal{W}_1$};
			\node[left] at (0,1) {\scriptsize $\mathcal{W}$};
			\draw[fill=blue!80] (2.54, 1.48) circle (2pt);
			\draw [fill=blue!80] (0.66, 0.34) circle (2pt);
			\draw (1.01, 1.68) circle (2pt);
			\draw (2.44, 1.85) circle (2pt);
			\draw (0.96, 1.41) circle (2pt);
			\draw (0.36, 0.32) circle (2pt);
			\draw (0.47, 0.52) circle (2pt);
			\draw (1,1.2) circle (2pt);
			\draw (1.88, 1.39) circle (2pt);
			\draw [fill=blue!80](1.25, 0.27) circle (2pt);
			\draw (0.63, 1.71) circle (2pt);
			\draw (1.01, 0.73) circle (2pt);
			\draw (2.24, 0.28) circle (2pt);
			\draw [fill=blue!80](2.21, 1.84) circle (2pt);
			\draw (1.74, 1.11) circle (2pt);
			\draw (1.34, 0.51) circle (2pt);
			\draw (0.91, 0.11) circle (2pt);
			\draw [fill=blue!80](1.7, 0.62) circle (2pt);
			\draw (2.17, 0.70) circle (2pt);
			\draw (1.38, 1.63) circle (2pt);
			\draw [fill=blue!80](2.61, .43) circle (2pt);
			\draw [fill=blue!80](0.25, 1.31) circle (2pt);
			\draw (1.7, 0.26) circle (2pt);
			\draw [fill=blue!80](1.25,1) circle (2pt);
			\end{tikzpicture}
			\caption{}
			\label{fig:figonea}
	\end{subfigure}%

	\begin{subfigure}{1\textwidth}
			\begin{tikzpicture}\hspace{.27\textwidth}
			\filldraw[greenlink] (-3.6,0)--(4,0)--(4,-0.2)--(-3.6,-0.2)--cycle;
			\draw[thick,fill=yellow!50!orange!50] (-3,2)--(-3,0)--(3,0)--(3,2)--cycle;
			\draw[thick,densely dashed] (0,0)--(0,2);
			\draw[thick] (-1,0)--(-1,2);
			\draw[thick] (-3.6,0)--(4,0);
			\draw[thick,densely dashed] (2,0)--(2,2);
			\node[right] at (0,1.7) {\small $\mathcal{K}$};
			\node[right] at (-3,1.7) {\small $\mathcal{K}_1$};
			\node[left] at (2.05,1) {\scriptsize $\mathcal{W}_1$};
			\node[right] at (-.05,1) {\scriptsize $\mathcal{W}_2$};
			\node[left] at (-1,1) {\scriptsize $\mathcal{W}$};
			\draw[->] (0,.5)--(-1,.5);
			\draw[->] (0,1)--(-1,1);
			\draw[->] (0,1.5)--(-1,1.5);
			\node[right] at (3,0.2) {\small $\mathcal{R}\ T$};
			\draw[->] (3,.5)--(2,.5);
			\draw[->] (3,1)--(2,1);
			\draw[->] (3,1.5)--(2,1.5);
			\draw (2.5, 0.3) circle (2pt);
			\draw (0.7, 1.3) circle (2pt);
			\draw (1, 1) circle (2pt);
			\draw (2.4, 1.4) circle (2pt);
			\draw (0.95, 0.4) circle (2pt);
			\draw (0.4, 1.3) circle (2pt);
			\draw [fill=blue!80](-.7, .8) circle (2pt);
			\draw (2.7,0.8) circle (2pt);
			\draw [fill=blue!80](1.9, 1.9) circle (2pt);
			\draw [fill=blue!80](1.2, 1.3) circle (2pt);
			\draw [fill=blue!80](0.6, 0.7) circle (2pt);
			\draw (1.01, .13) circle (2pt);
			\draw (2.74, 1.78) circle (2pt);
			\draw (2.21, 1.34) circle (2pt);
			\draw [fill=blue!80](-.7, 1.81) circle (2pt);
			\draw (1.3, 1.51) circle (2pt);
			\draw [fill=blue!80](-0.3, 1.11) circle (2pt);
			\draw (1.7, 1.62) circle (2pt);
			\draw (2.17, 0.70) circle (2pt);
			\draw (1.4, 0.6) circle (2pt);
			\draw [fill=blue!80](-.6, .2) circle (2pt);
			\draw (0.3, 0.3) circle (2pt);
			\draw (1.7, 0.26) circle (2pt);
			\draw [fill=blue!80](1.2,0.3) circle (2pt);
			\end{tikzpicture}
			\caption{}
			\label{fig:figoneb}
	\end{subfigure}%

	\begin{subfigure}{1\textwidth}\hspace{.27\textwidth}
			\begin{tikzpicture}
			\filldraw[greenlink] (-3.6,0)--(4,0)--(4,-0.2)--(-3.6,-0.2)--cycle;
			\draw[thick,fill=yellow!50!orange!50] (-3,2)--(-3,0)--(3,0)--(3,2)--cycle;
			\draw[thick,densely dashed] (0,0)--(0,2);
			\draw[thick,densely dashed] (-0.05,0)--(-0.05,2);
			\draw[thick] (-2.95,0)--(-2.95,2);
			\draw[thick] (-3.6,0)--(4,0);
			\node[left] at (0,1) {\scriptsize $\mathcal{W}_2$};
			\node[right] at (0,1) {\scriptsize $\mathcal{W}_1$};
			\node[right] at (-3,1) {\scriptsize $\mathcal{W}$};
			\node[right] at (3,0.2) {\small $\mathcal{R}\ T$};
			\node[right] at (0,1.7) {\small $\mathcal{K}$};
			\node[right] at (-3,1.7) {\small $\mathcal{K}_1$};
			\draw (2.5, 0.2) circle (2pt);
			\draw (0.7, .3) circle (2pt);
			\draw (1, 1.6) circle (2pt);
			\draw (2.4, 1) circle (2pt);
			\draw [fill=blue!80](-0.95, 1.4) circle (2pt);
			\draw (0.4, .3) circle (2pt);
			\draw (1.3, 1.9) circle (2pt);
			\draw (2.7,1.3) circle (2pt);
			\draw (1.9, .9) circle (2pt);
			\draw [fill=blue!80](-1.2, .3) circle (2pt);
			\draw (0.6, 1.7) circle (2pt);
			\draw (1.0, 1) circle (2pt);
			\draw (2.4, 1.8) circle (2pt);
			\draw [fill=blue!80](-2.21, 1.4) circle (2pt);
			\draw (.74,.81) circle (2pt);
			\draw (1.34,.51) circle (2pt);
			\draw [fill=blue!80](-0.3, .2) circle (2pt);
			\draw (1.7, .6) circle (2pt);
			\draw [fill=blue!80](-2.17, 0.7) circle (2pt);
			\draw [fill=blue!80](-1.38, 1.6) circle (2pt);
			\draw [fill=blue!80](-.61, 1.4) circle (2pt);
			\draw (0.25, 1.31) circle (2pt);
			\draw [fill=blue!80](-1.7, 0.4) circle (2pt);
			\draw (1.25,1.5) circle (2pt);
			\end{tikzpicture}
			\caption{}
			\label{fig:figonec}
	\end{subfigure}%
	\caption{On the left of the box $\mathcal{K}$ is placed another box $\mathcal{K}_1$, equal to it. Between them there are a wall $\mathcal{W}$ and a semipermeable wall $\mathcal{W}_2$, transparent only for the pure component $\omega_{j}^{(i)}$. On the right of the box $\mathcal{K}$ there is another semipermeable wall opaque only to the pure component $\omega_{j}^{(i)}$. If  $\mathcal{W}$ and $\mathcal{W}_1$ are translated to the left, by keeping their distance constant, the component $\omega_{j}^{(i)}$ is separated in a reversible way.}
	\label{fig:figone}
\end{figure}

To separate the pure components $\omega_j^{(i)}$ of the state $\omega$ represented by $|{\psi_j^{(i)}}\rangle\langle{\psi_j^{(i)}}|$, we use a \textit{semipermeable wall}, constructed as a wall with some windows on it. In particular, when a box $\mathcal{K}_i$ reaches a window, we let an engine open it and measure the observable $A$ on the state inside the box. If the result is a given value $a_j^{(i)}$, the engine lets the box pass; otherwise, it reflects it. In this way, the wall is transparent for the states $\omega_j^{(i)}$ and opaque for the others. Using such a wall, it is possible to separate the pure components (see figure~\ref{fig:figone}).

\begin{figure}[h!bpt]
	\centering
	\begin{subfigure}{.5\textwidth}\hspace{.2\textwidth}
			\begin{tikzpicture}
			\fill[yellow!50!orange!50](-2.5,2)--(-2.5,0)--(2.5,0)--(2.5,2)--cycle;
			\node[right] at (1.5,0.2) {\small $\mathcal{R}\ T$};   	 \filldraw[greenlink] (-2.5,0)--(2.5,0)--(2.5,-0.2)--(-2.5,-0.2)--cycle;   	 
			\draw[thick] (-1.5,2)--(-1.5,0)--(1.5,0)--(1.5,2)--cycle;
			\draw[thick] (-2.5,2)--(2.5,2);
			\draw[thick] (-2.5,0)--(2.5,0);
			\draw[thick] (0.5,0)--(0.5,2);
			\draw[->] (-1.5,.5)--(0.5,0.5);
			\draw[->] (-1.5,1)--(0.5,1);
			\draw[->] (-1.5,1.5)--(0.5,1.5);
			\node[right] at (-1.5,2.3) {$\mathcal{K}_i$};
			\node[left] at (-1.5,2.3) {$\mathcal{K}_{i-1}$};
			\node[right] at (1.5,2.3) {$\mathcal{K}_{i+1}$};
			\node[right] at (-1.5,1.75) {$\mathcal{V}$};
			\node[right] at (.5,1.75) {$\mathcal{V}_i$};
			\end{tikzpicture}
		\caption{}
		\label{fig:figtwoa}
	\end{subfigure}%
	\\
	\begin{subfigure}{.5\textwidth}\hspace{.2\textwidth}
			\begin{tikzpicture}
			\node[right] at (1.5,0.2) {\small $\mathcal{R}\ T$};
			\filldraw[greenlink] (-2.5,0)--(2.5,0)--(2.5,-0.2)--(-2.5,-0.2)--cycle;
			\draw[thick,fill=yellow!50!orange!50] (-1.5,2)--(-1.5,0)--(1.5,0)--(1.5,2)--cycle;
			\draw[thick] (-2.5,0)--(2.5,0);
			\draw[thick] (-1.,0)--(-1.,2);
			\draw[thick] (-0.6,0)--(-0.6,2);
			\draw[thick] (-0.3,0)--(-0.3,2);
			\draw[thick] (0,0)--(0,0);
			\draw[thick] (.2,0)--(0.2,2);
			\draw[thick] (0.5,0)--(0.5,2);
			\draw[thick] (0.7,0)--(0.7,2);
			\draw[thick] (0.9,0)--(0.9,2);
			\draw[thick] (1.,0)--(1,2);
			\draw[thick] (1.1,0)--(1.1,2);
			\draw[thick] (1.2,0)--(1.2,2);
			\draw[thick] (1.25,0)--(1.25,2);
			\draw[thick] (1.3,0)--(1.3,2);
			\draw[thick] (1.35,0)--(1.35,2);
			\draw[thick] (1.4,0)--(1.4,2);
			\draw[thick] (1.45,0)--(1.45,2);
			\node at (-1.35,2.3) {$\mathcal{K}_1$};
			\node at (-.7,2.3) {$\mathcal{K}_2$};
			\node at (-.1,2.3) {$\dotsc$};
			\end{tikzpicture}
		\caption{}
		\label{fig:figtwob}
	\end{subfigure}
	\caption{Each box is compressed reversibly in order to have the same density in all the boxes. The process is carried on isothermally at temperature $T$.}
	\label{fig:figtwo}
\end{figure}
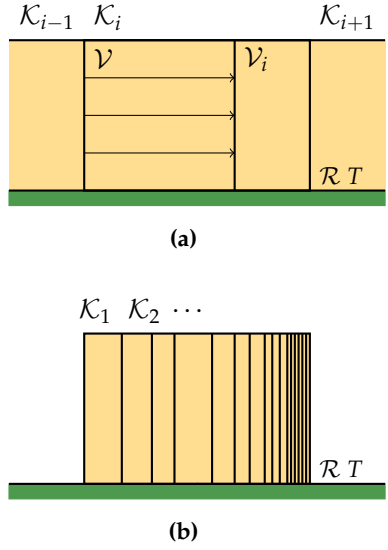

This process is reversible, and we get a final configuration of equal boxes, each containing one of the components $\omega_j^{(i)}$ of the gas.
We then compress each box isothermally, so that the system will have the same density of the original gas, see figure~(\ref{fig:figtwo}). The heat exchanged in each compression is given by
\begin{equation}
\mathrm{Q}=W=k_{\mathrm{B}} p_i\lambda_j^{(i)} M T \log{\frac{\mathcal{V}_{\mathrm{fin}}}{\mathcal{V}_{\mathrm{in}}}}=k_\mathrm{B} p_i\lambda_j^{(i)} MT\log\left({p_i\lambda_j^{(i)}}\right).
\end{equation}

The initial entropy of the gas is therefore
\begin{equation}
\mathcal{S}_{\mathrm{gas}}= -\sum_{ij}k_\mathrm{B} p_i\lambda_j^{(i)}  M\log\left({p_i\lambda_j^{(i)}}\right) + \mathcal{S}_{\mathrm{pure}}= M S_{\mathrm{VN}}(\rho) + \mathcal{S}_{\mathrm{pure}}.
\end{equation}
We now need to find  the entropy of the final configuration $\mathcal{S}_{\mathrm{pure}}$ consisting in separated pure components of the gas. Since entropy is an extensive quantity, it is given by the sum of the entropies of the pure components:
\begin{equation}
\mathcal{S}_\mathrm{pure}=\sum_{i=1}^{N}\left(\sum_j   p_i\lambda_j^{(i)} s_i  M\right)=\sum_{i=1}^N  p_i  s_i M.
\end{equation}
Therefore, we finally get
\begin{equation}
\frac{\mathcal{S}_{\mathrm{gas}}}{M} = \mathcal{S}_{\mathrm{VN}}(\rho)+\sum_i p_i s_i  = \mathcal{S}(\omega)+\sum_i p_i s_i ,
\label{eq:thermoent}
\end{equation}
where equality~\eqref{entropyalternative} was used. 

The entropy of the state $\omega$ obtained by thermodynamic considerations in~\eqref{eq:thermoent} differs from $\mathcal{S}(\omega)$ given in~\eqref{entropydef} by an additional term, $\sum_i p_i s_i$, which is the average of the arbitrary entropies $s_i$ assigned to pure states belonging to different phases.  By \emph{assuming} that pure states belonging to disjoint phases have the same entropy $s_1=s_2=\dots=s_N$ we get that the thermodynamic entropy is equal to the entropy $\mathcal{S}(\omega)$ up to an arbitrary constant, which we can set to~0. This is in agreement with the physical meaning of expression~\eqref{entropydef}, where the entropic content of a state $\omega$ is obtained exclusively as a result of  the mixing process with weights $p_i$ of pure states $\omega_i$ with zero entropy.

\subsection{Evaluation via the GNS construction}
In this last section we compute the entropy~\eqref{entropydef} of a quantum state $\omega$ by using the GNS representation of $\omega$. The problem of the ambiguity was studied in this framework by Balachandran, de Queiroz and Vaidya~\cite{balachandran2013entropy}. In particular, they described how to represent irreducible sub-representations as decomposition into pure states. This can be generalized for any decomposition. 

We start with the following result~\citep{bratteli2012operator,segal1947irreducible}.
\begin{thm}
Let $\omega$ be a state, and $(\mathcal{H}_\omega,\pi_\omega,\Omega_\omega)$ be its GNS representation. Then the following conditions are equivalent.
\begin{itemize}
\item $(\mathcal{H}_\omega,\pi_\omega)$ is irreducible;
\item $\omega$ is pure.
\end{itemize}
Moreover, there is a one to one relation between positive functionals $\lambda\omega_T$ over $\mathfrak{A}$ and majorized by $\omega$ and positive operators $T$ on $\mathcal{H}_\omega$ in the commutant $\pi'_\omega(\mathfrak{A})$ and with norm $\lVert{T}\rVert\leqslant 1$:
\begin{equation}
\lambda\omega_T(A)=\langle{\Omega_\omega}|{T\pi_\omega(A)\Omega_\omega}\rangle.
\end{equation}
\end{thm}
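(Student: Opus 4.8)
The plan is to establish the second assertion first --- the bijection $T\leftrightarrow\lambda\omega_T$ between positive operators of norm at most $1$ in the commutant and positive functionals majorized by $\omega$ --- and then to deduce the equivalence ``$(\mathcal{H}_\omega,\pi_\omega)$ irreducible $\iff$ $\omega$ pure'' from it via Schur's lemma. Throughout I would use the defining GNS identity $\omega(A^*B)=\langle\pi_\omega(A)\Omega_\omega|\pi_\omega(B)\Omega_\omega\rangle$, the cyclicity of $\Omega_\omega$ for $\pi_\omega(\mathfrak{A})$, and the fact that a finite-dimensional $C^*$-algebra is unital (Theorem~\ref{structuttheorzero}), so that $\mathbb{I}\in\mathfrak{A}$.

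For one direction of the bijection, given $T\in\pi'_\omega(\mathfrak{A})$ with $0\leqslant T\leqslant\mathbb{I}$, I would set $\mu(A)=\langle\Omega_\omega|T\pi_\omega(A)\Omega_\omega\rangle$. Writing $T=S^2$ with $S\geqslant0$ in the commutant (functional calculus) and using that $S$ commutes with $\pi_\omega(A)$, one gets $\mu(A^*A)=\|\pi_\omega(A)S\Omega_\omega\|^2\geqslant0$; applying the same device to $\mathbb{I}-T\geqslant0$ gives $(\omega-\mu)(A^*A)=\|\pi_\omega(A)(\mathbb{I}-T)^{1/2}\Omega_\omega\|^2\geqslant0$, so $\mu$ is a positive functional majorized by $\omega$, with $\mu(\mathbb{I})=\langle\Omega_\omega|T\Omega_\omega\rangle=:\lambda$. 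For the converse, given a positive functional $\mu$ with $0\leqslant\mu\leqslant\omega$, I would consider the sesquilinear form $(\pi_\omega(A)\Omega_\omega,\pi_\omega(B)\Omega_\omega)\mapsto\mu(A^*B)$ on the dense subspace $\pi_\omega(\mathfrak{A})\Omega_\omega$. Cauchy--Schwarz for the positive functional $\mu$, combined with the domination $\mu(B^*B)\leqslant\omega(B^*B)=\|\pi_\omega(B)\Omega_\omega\|^2$, shows simultaneously that the form is independent of the chosen representatives (if $\pi_\omega(B)\Omega_\omega=0$ then $\mu(B^*B)=0$, hence $|\mu(A^*B)|^2\leqslant\mu(A^*A)\,\mu(B^*B)=0$) and that it is bounded by $1$; Riesz's lemma then produces a unique bounded operator $T$ with $0\leqslant T\leqslant\mathbb{I}$ such that $\mu(A^*B)=\langle\pi_\omega(A)\Omega_\omega|T\pi_\omega(B)\Omega_\omega\rangle$. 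Replacing $B$ by $CB$ and using $\mu(A^*(CB))=\mu((C^*A)^*B)$ yields $\langle\pi_\omega(A)\Omega_\omega|(T\pi_\omega(C)-\pi_\omega(C)T)\pi_\omega(B)\Omega_\omega\rangle=0$ for all $A,B$, so $T\in\pi'_\omega(\mathfrak{A})$ by cyclicity; taking $A=\mathbb{I}$ recovers $\mu(B)=\langle\Omega_\omega|T\pi_\omega(B)\Omega_\omega\rangle$. Since the two constructions are manifestly inverse to each other and $T$ is unique (again by cyclicity), this is the claimed bijection.

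For the equivalence, if $\pi_\omega$ is irreducible then $\pi'_\omega(\mathfrak{A})=\mathbb{C}\mathbb{I}$ by Schur's lemma, so every positive functional majorized by $\omega$ is of the form $t\omega$ with $t\in[0,1]$; writing $\omega=\lambda\omega_1+(1-\lambda)\omega_2$ with $\lambda\in(0,1)$ and $\omega_i$ states, the functional $\lambda\omega_1$ is positive and majorized by $\omega$, hence $\lambda\omega_1=t\omega$, and evaluating at $\mathbb{I}$ forces $t=\lambda$, so $\omega_1=\omega$ and likewise $\omega_2=\omega$, proving $\omega$ pure. Conversely, if $\pi_\omega$ is reducible, let $P$ be the orthogonal projection onto a nontrivial closed invariant subspace; then $P\in\pi'_\omega(\mathfrak{A})$, $P\neq0,\mathbb{I}$, and cyclicity of $\Omega_\omega$ forces $\lambda:=\langle\Omega_\omega|P\Omega_\omega\rangle\in(0,1)$ (if $\lambda=0$ then $P\Omega_\omega=0$, hence $P\pi_\omega(A)\Omega_\omega=\pi_\omega(A)P\Omega_\omega=0$ and $P=0$; similarly $\lambda=1$ forces $P=\mathbb{I}$). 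The normalized functionals $\omega_1(A)=\lambda^{-1}\langle\Omega_\omega|P\pi_\omega(A)\Omega_\omega\rangle$ and $\omega_2(A)=(1-\lambda)^{-1}\langle\Omega_\omega|(\mathbb{I}-P)\pi_\omega(A)\Omega_\omega\rangle$ are states with $\omega=\lambda\omega_1+(1-\lambda)\omega_2$, and $\omega_1\neq\omega$ because $\omega_1=\omega$ would give, by cyclicity, $P=\lambda\mathbb{I}$, impossible for a projection with $\lambda\in(0,1)$. Hence $\omega$ is not pure.

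I expect the reverse direction of the bijection to be the main obstacle: showing that the form $\mu(A^*B)$ genuinely descends to a well-defined bounded form on $\pi_\omega(\mathfrak{A})\Omega_\omega$ is exactly the step where the hypothesis $\mu\leqslant\omega$ is indispensable, and it is the content-bearing use of the Cauchy--Schwarz inequality for positive functionals. Once the operator $T$ has been produced, its positivity, norm bound, and commutation with $\pi_\omega(\mathfrak{A})$, as well as both implications of the equivalence, reduce to routine manipulations of the GNS identities and Schur's lemma.
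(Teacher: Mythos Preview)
Your proof is correct and is essentially the standard argument found in Bratteli--Robinson, which the paper cites for this result. Note, however, that the paper does \emph{not} supply its own proof of this theorem: it is quoted as a known result with references to~\cite{bratteli2012operator,segal1947irreducible}, and is then used as a tool in the subsequent computation of the entropy via the GNS construction. So there is no independent proof in the paper to compare against; your argument simply fills in what the cited references contain, and it does so faithfully---constructing $T$ from a majorized functional via the bounded sesquilinear form and Riesz's lemma, verifying $T\in\pi'_\omega(\mathfrak{A})$ by cyclicity, and deducing the purity/irreducibility equivalence from Schur's lemma exactly as in the standard treatment.
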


Notice that here, $\lambda$ is introduced in order to make $\omega_T$ a state. Moreover, we say that $\lambda\omega_T$ is majorized by $\omega$ if $\omega-\lambda\omega_T$ is positive, that is:
\begin{equation}
\omega(A^*A)-\lambda\omega_T(A^*A)\geqslant 0
\end{equation}

for all $A$. Observe that $\omega$ majorizes $\lambda\omega_T$ if and only if $\omega= \lambda\omega_T + (1-\lambda)\omega_S$ for some state $\omega_S$. Therefore, the above theorem links a convex decomposition to operators on a Hilbert space. In particular, one can prove that $\omega_T$ is pure if and only if $T$ is proportional to a projection $P_T$ in the commutant, and the corresponding sub-representation $(\mathcal{H}_T,\pi_T)$ is irreducible~\cite{bratteli2012operator}.

As a result, given a state $\omega$, it is equivalent to consider a decomposition into pure states $\omega_i$,
\begin{equation}
\omega=\sum_i \lambda_i \omega_i,
\end{equation}
or a decomposition of the identity of the representation in projections $P_i$,
\begin{equation}\label{decompoperator}
\mathbb{I}_{\mathcal{H}_\omega}=\sum_i t_i P_i,
\end{equation}
with $t_i\leqslant 1$ and
\begin{equation}\label{projectiondef}
\lambda_i\omega_i(A)=t_i\langle{\Omega_\omega}|{P_i\pi(A)\Omega_\omega}\rangle.
\end{equation}
The weights of the decomposition are obtained by evaluating equation~(\ref{projectiondef}) at $A=\mathbb{I}$:
\begin{equation}\label{weightsvalue}
\lambda_i=t_i\langle{\Omega_\omega}|{P_i\Omega_\omega}\rangle.
\end{equation}

Note that if $t_i=1$ for all $i$, the projections will be orthogonal to each other, and we obtain a decomposition of the GNS representation into irreducible sub-representations,
\begin{equation}\label{decomprepresentirr}
(\mathcal{H}_\omega,\pi_\omega)=\bigoplus_i(\mathcal{H}_i,\omega_i).
\end{equation}
This is the description given in~\cite{balachandran2013entropy}.

In the finite dimensional case, a decomposition into  irreducible sub-representations always exists, as well as a decomposition into pure states is always possible in a convex set (by Minkowski's theorem).
We can decompose the representation as
\begin{equation}
\left(\mathcal{H}_\omega,\pi_\omega\right)=\bigoplus_{i=1}^N \left(\mathcal{H}_i^{(m_i)},\pi_i^{(m_i)}\right)\equiv\bigoplus_{i=1}^N \left(\mathcal{H}_i\otimes\mathbb{C}^{m_i},\pi_i\otimes\mathbb{I}_{m_i}\right)
\end{equation}
using the unitary transformation~\eqref{ftn:unitarytransf}. By the structure theorem, the representation of the algebra is
\begin{equation}\label{algebrastructure}
\pi(\mathfrak{A})=\Big(\mathcal{B}(\mathcal{H}_1)\otimes\mathbb I_{m_1}\Big)\oplus\Big(\mathcal{B}(\mathcal{H}_2)\otimes\mathbb I_{m_2}\Big)\oplus\dots\oplus\Big(\mathcal{B}(\mathcal{H}_N)\otimes\mathbb I_{m_N}\Big)
\end{equation}
and its commutant is
\begin{equation}\label{commutant}
\pi'(\mathfrak{A})=\Big(\mathbb{I}_{\mathcal{H}_1}\otimes M_{m_1}\Big)\oplus\Big(\mathbb{I}_{\mathcal{H}_2}\otimes M_{m_2}\Big)\oplus\dots\oplus\Big(\mathbb{I}_{\mathcal{H}_N}\otimes M_{m_N}\Big).
\end{equation}
Thus, from~(\ref{commutant}), the irreducible projections have the form
\begin{equation}\label{irrproj}
P=\mathbb{I}_{\mathcal{H}_i}\otimes |{v}\rangle\langle{v}|
\end{equation}
for some $i$, with $v$ a unit vector in $\mathbb{C}^{m_i}$.

Therefore, given a family of irreducible projections $(P_j^{(i)})$, equation~(\ref{decompoperator}) becomes
\begin{equation}\label{problem}
\mathbb{I}_{\mathcal H_\omega}=\sum_{ij} t_j^{(i)} P_j^{(i)},
\end{equation}
with $t_j^{(i)}\leqslant 1$ and
\begin{equation}\label{minimalprojection}
P^{(i)}_j=\mathbb{I}_{\mathcal{H}_i}\otimes |{v_j^{(i)}}\rangle\langle{v_j^{(i)}}|.
\end{equation}
In particular, the index $i$ labels the sub-representation $\mathcal H_i\otimes \mathbb{C}^{m_i}$ considered, while $j$ labels the different projections in it.
From~(\ref{problem}) we get, for all $i=1,\dots, N$,
\begin{equation}\label{identitydecomp}
\mathbb{I}_{m_i}=\sum_j t_j^{(i)} |{v_j^{(i)}}\rangle\langle{v_j^{(i)}}|=\sum_j |{u_j^{(i)}}\rangle\langle{u_j^{(i)}}|,
\end{equation}
with
\begin{equation}
u_j^{(i)}=\sqrt{t_j^{(i)}}v_j^{(i)}.
\end{equation}

Consider now the normalized projection of $\Omega_\omega$ on $\mathcal{H}_i\otimes\mathbb{C}^{m_i}$, namely
\begin{equation}\label{projectionvacuum}
\Omega_i=\frac{1}{\sqrt{p_i}}\left(\mathbb{I}_{\mathcal H_i}\otimes\mathbb{I}_{m_i}\right)\Omega_\omega,
\end{equation}
where $p_i=\lVert{\left(\mathbb{I}_{\mathcal H_i}\otimes\mathbb{I}_{m_i}\right)\Omega_\omega}\rVert^2$. 
By plugging~(\ref{minimalprojection}) and~(\ref{projectionvacuum}) into equation~(\ref{weightsvalue}) we get
\begin{align}
\lambda^{(i)}_j&=t_j^{(i)}\langle{\Omega_\omega}|{P_j^{(i)}\Omega_\omega}\rangle=t_j^{(i)}p_i\langle{\Omega_i}|{(\mathbb{I}_{\mathcal H_i}\otimes|{v_j^{(i)}}\rangle\langle{v_j^{(i)}}|)\Omega_i}\rangle\nonumber\\
&=p_i\langle{\Omega_i}|{(\mathbb{I}_{\mathcal H_i}\otimes|{u_j^{(i)}}\rangle\langle{u_j^{(i)}}|)\Omega_i}\rangle\nonumber\\ \label{decompositiongns}
&=p_i\langle{u_j^{(i)}}|{\operatorname{Tr}_{\mathcal{H}_i}\left(|{\Omega_i}\rangle\langle{\Omega_i}|\right)u_j^{(i)}}\rangle=p_i\langle{u_j^{(i)}}|{\sigma_i u_j^{(i)}}\rangle
\end{align}
with
\begin{equation}\label{reducedstateomega}
\sigma_i=\operatorname{Tr}_{\mathcal{H}_i}\left(|{\Omega_i}\rangle\langle{\Omega_i}|\right)
\end{equation}

In general the decomposition of the identity in equation~(\ref{identitydecomp}) will consist of $M_i\geqslant m_i$ elements. If $(e_j^{(i)})_{j=1,\dots,m_i}$ is an orthonormal  basis of $\mathbb{C}^{m_i}$, it can be written as
\begin{equation}\label{orthogonal}
\delta_{kh}=\sum_{j=1}^{M_{i}}\langle{e_k^{(i)}}|{u_j^{(i)}}\rangle\langle{u_j^{(i)}}|{e_h^{(i)}}\rangle.
\end{equation}
This is an orthonormal relation between $m_i$ vectors of length $M_i$. We can expand the Hilbert space adding $M_i-m_i$ vectors $e^{(i)}_{m_i+1},\dots,e^{(i)}_{M_i}$, and obtain a complete orthonormal system in equation~(\ref{orthogonal}). Therefore, by setting
\begin{equation}
\tilde u_j^{(i)}=\sum_{k=1}^{M_i} \langle{e_k^{(i)}}|{u_j^{(i)}}\rangle e_k^{(i)}
\end{equation}
we will also get complete orthonormal system in $\mathbb{C}^{M_i}$. The operators $\sigma_i$ are defined so that they vanish on $e^{(i)}_j$ for $j>m_i$.

We now evaluate the Shannon entropy of the weight $\lambda_j^{(i)}$ in~(\ref{decompositiongns}):
\begin{align}
\mathrm{H}(\vec\lambda)=&-\sum_{ij} \lambda_j^{(i)}\log\lambda_j^{(i)}\nonumber\\
=&-\sum_{ij} p_i\langle{\tilde u_j^{(i)}}|{\sigma_i\tilde u_j^{(i)}}\rangle\log p_i\nonumber\\
&-\sum_{ij}p_i\langle{\tilde u_j^{(i)}}|{\sigma_i\tilde u_j^{(i)}}\rangle \log\langle{\tilde u_j^{(i)}}|{\sigma_i\tilde u_j^{(i)}}\rangle
\end{align}
Since $\Omega_i$ is normalized, the term in the second line will become 
\begin{equation}
-\sum_i p_i \log p_i=\mathrm{H}(\vec p).
\end{equation}
The term in the last sum takes its minimal value when $\tilde u_j^{(i)}$ are the eigenvectors of the reduced density matrix $\sigma_i$, becoming its von Neumann entropy. We finally get
\begin{equation}
\mathrm{H}(\vec\lambda)\geqslant \mathrm{H}({\vec p})+\sum_{i=1}^{M_i} p_i \mathcal{S}_{\mathrm{VN}}(\sigma_i)=\mathcal S(\omega),
\end{equation}
where formula~\eqref{finalformula} was used.

It is clear that we have re-obtained by this approach the  results previously obtained by using a faithful representation. However, some properties of the entropy---concavity, for example---are somewhat hidden in this description. Nevertheless, the derivation via the GNS construction might prove itself to be useful if one would like to extend these results to the infinite-dimensional case.

\section{Conclusions}\label{sec:conc}
We have seen that the ambiguity in the definition of the quantum entropy of a state can be traced back to an ambiguity in the definition of a representative on a Hilbert space, as different density matrices can be physically equivalent for a $C^*$-algebra of observables.

We started by observing the property of the von Neumann entropy to be the minimum of the Shannon entropies of the decompositions into  pure states. This minimality property was assumed to define unambiguously an entropy on the convex set of states over a $C^*$-algebra, obtaining a concave entropy, that generalizes the von Neumann entropy. We find that the theory can always be represented in an Hilbert space in which it yields the von Neumann entropy of a  suitable density matrix.

We also observed that it is possible to obtain this entropy by using thermodynamic arguments. The main difference with respect to  quantum mechanics is that we have to assume pure states to be isoentropic. In particular, we found that a theory can have disjoint sectors, associated with nontrivial invariant subspaces, and pure states of different sectors cannot be connected by a physical  process.

An interesting open problem would be the extension of our results  to an \textit{infinite-dimensional} $C^*$-algebra of observables. Here, new phenomena arise as there are states which are not represented by a density matrix and, in general, one expects e.g.\ to have decompositions given by an integral ---with a suitable measure $\mu$--- over the set of pure states.

\acknowledgments{This work was partially supported by Istituto Nazionale di Fisica Nucleare (INFN) through the project ``QUANTUM'', and by the Italian National Group of Mathematical Physics (GNFM-INdAM).}

%%%%%%%%%%%%%%%%%%%%%%%%%%%%%%%%%%%%%%%%%%
\reftitle{References}

% Please provide either the correct journal abbreviation (e.g. according to the “List of Title Word Abbreviations” http://www.issn.org/services/online-services/access-to-the-ltwa/) or the full name of the journal.
% Citations and References in Supplementary files are permitted provided that they also appear in the reference list here. 

%=====================================
% References, variant A: external bibliography
%=====================================
\externalbibliography{yes}
\bibliography{mybibfile}

\begin{thebibliography}{999}

\bibitem[von Neumann(1955)]{von1996mathematical}
von Neumann, J.
\newblock {\em Mathematical Foundations of Quantum Mechanics}; Vol.~2, {\em
  Investigations in physics}, Princenton University Press, Princenton,  1955.

\bibitem[Popescu and Rohrlich(1997)]{popescu97}
Popescu, S.; Rohrlich, D.
\newblock Thermodynamics and the measure of entanglement.
\newblock {\em Phys. Rev. A} {\bf 1997}, {\em 56},~R3319.
\newblock
  doi:{\changeurlcolor{black}\href{https://doi.org/10.1103/PhysRevA.56.R3319}{\detokenize{10.1103/PhysRevA.56.R3319}}}.

\bibitem[Donald \em{et~al.}(2002)Donald, Horodecki, and
  Rudolph]{donald2002uniqueness}
Donald, M.J.; Horodecki, M.; Rudolph, O.
\newblock The uniqueness theorem for entanglement measures.
\newblock {\em Journal of Mathematical Physics} {\bf 2002}, {\em
  43},~4252--4272.

\bibitem[Horodecki(2001)]{horodecki2001entanglement}
Horodecki, M.
\newblock Entanglement measures.
\newblock {\em Quantum Inf. Comput.} {\bf 2001}, {\em 1},~3--26.

\bibitem[Bombelli \em{et~al.}(1986)Bombelli, Koul, Lee, and
  Sorkin]{bombelli1986quantum}
Bombelli, L.; Koul, R.K.; Lee, J.; Sorkin, R.D.
\newblock Quantum source of entropy for black holes.
\newblock {\em Physical Review D} {\bf 1986}, {\em 34},~373.

\bibitem[Holzhey \em{et~al.}(1994)Holzhey, Larsen, and
  Wilczek]{holzhey1994geometric}
Holzhey, C.; Larsen, F.; Wilczek, F.
\newblock Geometric and renormalized entropy in conformal field theory.
\newblock {\em Nuclear Physics B} {\bf 1994}, {\em 424},~443--467.

\bibitem[Srednicki(1993)]{srednicki1993entropy}
Srednicki, M.
\newblock Entropy and area.
\newblock {\em Physical Review Letters} {\bf 1993}, {\em 71},~666.

\bibitem[Bjorken and Drell(1965)]{bjorken1966relativistic}
Bjorken, J.; Drell, S.
\newblock {\em Relativistic quantum fields}; McGraw-Hill College,  1965.

\bibitem[Balachandran \em{et~al.}(2013)Balachandran, Queiroz, and
  Vaidya]{balachandran2013quantum}
Balachandran, A.; Queiroz, A.; Vaidya, S.
\newblock Quantum entropic ambiguities: ethylene.
\newblock {\em Physical Review D} {\bf 2013}, {\em 88},~025001.

\bibitem[Sorkin(2014)]{sorkin2014expressing}
Sorkin, R.D.
\newblock Expressing entropy globally in terms of (4D) field-correlations.
\newblock  Journal of Physics: Conference Series,  2014, Vol. 484, p. 012004.

\bibitem[Haag and Kastler(1964)]{haag1964algebraic}
Haag, R.; Kastler, D.
\newblock An algebraic approach to quantum field theory.
\newblock {\em Journal of Mathematical Physics} {\bf 1964}, {\em 5},~848--861.

\bibitem[Bratteli and Robinson(2012)]{bratteli2012operator}
Bratteli, O.; Robinson, D.W.
\newblock {\em Operator algebras and quantum statistical mechanics: Volume 1:
  C*-and W*-Algebras. Symmetry groups. Decomposition of states}; Springer
  Science \& Business Media,  2012.

\bibitem[Davidson(1996)]{davidson1996c}
Davidson, K.R.
\newblock {\em C*-algebras by example}; Vol.~6, American Mathematical Soc.,
  1996.

\bibitem[Uhlmann(1970)]{uhlmann1970shannon}
Uhlmann, A.
\newblock On the {S}hannon entropy and related functionals on convex sets.
\newblock {\em Reports on Mathematical Physics} {\bf 1970}, {\em 1},~147--159.

\bibitem[Balachandran \em{et~al.}(2013)Balachandran, de~Queiroz, and
  Vaidya]{balachandran2013entropy}
Balachandran, A.; de~Queiroz, A.; Vaidya, S.
\newblock Entropy of quantum states: Ambiguities.
\newblock {\em The European Physical Journal Plus} {\bf 2013}, {\em 128},~112.

\bibitem[Araki \em{et~al.}(1999)Araki et~al.]{araki1999mathematical}
Araki, H.; others.
\newblock {\em Mathematical theory of quantum fields}; Vol. 101, Oxford
  University Press on Demand,  1999.

\bibitem[Gelfand and Neumark(1994)]{gelfand1994imbedding}
Gelfand, I.; Neumark, M.
\newblock On the imbedding of normed rings into the ring of operators in
  {H}ilbert space.
\newblock {\em CONTEMPORARY MATHEMATICS} {\bf 1994}, {\em 167},~3--3.

\bibitem[Segal(1947)]{segal1947irreducible}
Segal, I.E.
\newblock Irreducible representations of operator algebras.
\newblock {\em Bulletin of the American Mathematical Society} {\bf 1947}, {\em
  53},~73--88.

\bibitem[Strocchi(2008)]{strocchi2008introduction}
Strocchi, F.
\newblock {\em An introduction to the mathematical structure of quantum
  mechanics: a short course for mathematicians}; Vol.~28, World Scientific,
  2008.

\bibitem[Peres(2002)]{Peres2002}
Peres, A.
\newblock {\em Quantum Theory: Concepts and Methods}; Vol.~72, {\em Fundamental
  Theories of Physics}, Kluwer Academic Publishers, New York,  2002.

\bibitem[Bengtsson and {\.Z}yczkowski(2017)]{bengtsson2017geometry}
Bengtsson, I.; {\.Z}yczkowski, K.
\newblock {\em Geometry of quantum states: an introduction to quantum
  entanglement}; Cambridge university press,  2017.

\bibitem[Preskill(2015)]{preskill2}
Preskill, J., Lecture Notes for Ph219/CS219: Quantum Information and
  Computation;  2015; chapter 2. Foundations I: States and Ensembles.
\newblock \url{http://www.theory.caltech.edu/~preskill/ph219/chap2_13.pdf}.

\bibitem[Schr{\"o}dinger(1935)]{schrodinger1935discussion}
Schr{\"o}dinger, E.
\newblock Discussion of probability relations between separated systems.
\newblock {\em Mathematical Proceedings of the Cambridge Philosophical Society}
  {\bf 1935}, {\em 31},~555--563.

\bibitem[Bhatia(2013)]{bhatia2013matrix}
Bhatia, R.
\newblock {\em Matrix analysis}; Vol. 169, Springer Science \& Business Media,
  2013.

\bibitem[Marshall \em{et~al.}(1979)Marshall, Olkin, and
  Arnold]{marshall1979inequalities}
Marshall, A.W.; Olkin, I.; Arnold, B.C.
\newblock {\em Inequalities: theory of majorization and its applications}; Vol.
  143, Springer,  1979.

\bibitem[Hardy \em{et~al.}(1929)Hardy, Littlewood, and P{\'o}lya]{hardy}
Hardy, G.; Littlewood, J.; P{\'o}lya, G.
\newblock Some Simple Inequalities Satisfied by Convex Functions.
\newblock {\em Messenger of Mathematics} {\bf 1929}, {\em 58},~145--152.

\bibitem[Horn(1954)]{horn1954doubly}
Horn, A.
\newblock Doubly stochastic matrices and the diagonal of a rotation matrix.
\newblock {\em American Journal of Mathematics} {\bf 1954}, {\em 76},~620--630.

\bibitem[Nielsen(1999)]{Nielsen99}
Nielsen, M.A.
\newblock Conditions for a Class of Entanglement Transformations.
\newblock {\em Phys. Rev. Lett.} {\bf 1999}, {\em 83},~436.

\bibitem[Cunden \em{et~al.}(2020)Cunden, Facchi, Florio, and
  Gramegna]{Cunden20}
Cunden, F.D.; Facchi, P.; Florio, G.; Gramegna, G.
\newblock Volume of the set of LOCC-convertible quantum states.
\newblock {\em J. Phys. A: Math. Theor.} {\bf 2020}, {\em 53},~175303.

\bibitem[Winter and Yang(2016)]{Winter16}
Winter, A.; Yang, D.
\newblock Operational Resource Theory of Coherence.
\newblock {\em Phys. Rev. Lett.} {\bf 2016}, {\em 116},~120404.

\bibitem[Chitambar and Gour(2016)]{Chitambar16}
Chitambar, E.; Gour, G.
\newblock Critical Examination of Incoherent Operations and a Physically
  Consistent Resource Theory of Quantum Coherence.
\newblock {\em Phys. Rev. Lett.} {\bf 2016}, {\em 117},~030401.

\bibitem[Cunden \em{et~al.}(2021)Cunden, Facchi, Florio, and
  Gramegna]{Cunden21}
Cunden, F.D.; Facchi, P.; Florio, G.; Gramegna, G.
\newblock Generic aspects of the resource theory of quantum coherence.
\newblock {\em Physical Review A} {\bf 2021}, {\em 103},~022401.

\bibitem[Misra and Sudarshan(1977)]{misra1977zeno}
Misra, B.; Sudarshan, E.G.
\newblock The Zeno's paradox in quantum theory.
\newblock {\em Journal of Mathematical Physics} {\bf 1977}, {\em 18},~756--763.

\bibitem[Facchi and Pascazio(2008)]{facchi08}
Facchi, P.; Pascazio, S.
\newblock Quantum Zeno dynamics: mathematical and physical aspects.
\newblock {\em Journal of Physics A: Mathematical and Theoretical} {\bf 2008},
  {\em 41},~493001.

\end{thebibliography}

\end{document}